\DeclareMathOperator*{\tree}{tree}
\DeclareMathOperator*{\block}{block}
\DeclareMathOperator*{\regular}{regular}
\DeclareMathOperator*{\argmin}{arg\,min}
\DeclareMathOperator{\sign}{sign}
\newcommand{\calY}{\mathcal{Y}}
\newcommand{\calS}{(\bX, \by)}
\newcommand{\real}{\mathbb{R}}
\newcommand{\mZ}{\mathbb{Z}}
\newcommand{\bA}{\mathbf{A}}
\newcommand{\bB}{\mathbf{B}}
\newcommand{\bC}{\mathbf{C}}
\newcommand{\bD}{\mathbf{D}}
\newcommand{\bbQ}{\mathbb{Q}}
\newcommand{\bX}{\mathbf{X}}
\newcommand{\by}{\mathbf{y}}
\newcommand{\be}{\mathbf{e}}
\newcommand{\ba}{\mathbf{a}}
\newcommand{\bbeta}{\bm{\beta}}
\newcommand{\bmu}{\bm{\mu}}
\newcommand{\bSigma}{\bm{\Sigma}}
\newcommand{\bzero}{\mathbf{0}}
\newcommand{\mI}{\mathbb{I}}
\newcommand{\calF}{\mathcal{F}}
\newcommand{\prob}{\mathbb{P}}
\newcommand{\KL}{\mathbb{K}\mathbb{L}}
\newcommand{\T}{\intercal}
\newtheorem{theorem}{Theorem} 
\newtheorem{corollary}{Corollary} 
\newtheorem{lemma}{Lemma}
\newtheorem{proposition}{Proposition}
\newtheorem{definition}{Definition}
\title{Information Theoretic Limits for Standard and One-Bit Compressed Sensing with Graph-Structured Sparsity}
\author{{\bf Adarsh Barik} \\
	Department of Computer Science \\
	Purdue University \\
	\and
	{\bf Jean Honorio} \\
	Department of Computer Science \\
	Purdue University \\  }
\date{}
\begin{document}

\maketitle

\begin{abstract}
	In this paper, we analyze the information theoretic lower bound on the necessary number of samples needed for recovering a sparse signal under different compressed sensing settings. We focus on the weighted graph model, a model-based framework proposed by \cite{hegde2015nearly}, for standard compressed sensing as well as for one-bit compressed sensing. We study both the noisy and noiseless regimes. Our analysis is general in the sense that it applies to any algorithm used to recover the signal. We carefully construct restricted ensembles for different settings and then apply Fano's inequality to establish the lower bound on the necessary number of samples. Furthermore, we show that our bound is tight for one-bit compressed sensing, while for standard compressed sensing, our bound is tight up to a logarithmic factor of the number of non-zero entries in the signal.   
\end{abstract}

\paragraph{Keywords.}
	Standard Compressed Sensing, One-Bit Compressed Sensing,  Weighted Graph Model, Fano's Inequality

\section{Introduction}
\label{sec:introduction}

Sparsity has been a useful tool to tackle high dimensional problems in many fields such as compressed sensing, machine learning and statistics. Several naturally occurring and artificially created signals manifest sparsity in their original or transformed domain. For instance, sparse signals play an important role in applications such as medical imaging, geophysical and astronomical data analysis, computational biology, remote sensing as well as communications. 

In compressed sensing, sparsity of a high dimensional signal allows for the efficient inference of such a signal from a small number of observations. The true high dimensional sparse signal $\bbeta^* \in \real^d$ is not observed directly but its low dimensional linear transformation $\by = \bX \bbeta^* \in \real^n$ is observed along with the design matrix $\bX \in \real^{n \times d}$. The true high dimensional sparse signal $\bbeta^*$ is inferred from observations $(\bX, \by)$. Many signal acquisition settings such as magnetic resonance imaging \cite{lustig2007sparse} use compressed sensing as their underlying model. As a generalization to standard compressed sensing, one can further transform the measurements. One-bit compressed sensing~\cite{boufounos20081,plan2013robust} considers quantizing the measurements to one bit, i.e., $\by = \sign(\bX \bbeta^*)$. This kind of quantization is particularly appealing for hardware implementations.  

The design matrix $\bX$ is a rank deficient matrix. Therefore, in general, measurements $\by$ lose some signal information. However, it is well known that if $\bX$ satisfies the ``Restricted Isometric Property (RIP)'' and the signal $\bbeta^*$ is $s$-sparse (i.e., contains only $s$ non-zero entries) then a good estimation can be done efficiently using $O(s \log \frac{d}{s})$ samples. In practice, a large class of random design matrices satisfy RIP with high probability. Many algorithms such as CoSamp~\cite{needell2010cosamp}, Subspace Pursuit \cite{dai2008subspace} and and Iterative Hard Thresholding~\cite{blumensath2009iterative} use design matrices satisfying RIP which allows to provide high probability performance guarantees. 

The learning problem in compressed sensing is to recover a signal which is a \emph{good} approximation of the true signal. The goodness of approximation can be measured by either a pre-specified distance between the inferred and the true signal, or by the similarity of their support (i.e., the indices of their non-zero entries). The algorithms for compressed sensing try to provide performance guarantees for either one or both of these measures. For instance, \cite{needell2010cosamp}, \cite{dai2008subspace}  and \cite{blumensath2009iterative} provide performance guarantees in terms of distance, while \cite{karbasi2009support} and \cite{li2015sub} provide performance guarantees in terms of support recovery for standard compressed sensing. \cite{gopi2013one} provide guarantees in terms of both distance and support for one-bit compressed sensing.

\cite{baraniuk2010model} initially proposed a model-based sparse recovery framework. Under this framework, \cite{baraniuk2010model} have shown that the sufficient number of samples for correct recovery is logarithmic with respect to the cardinality of the sparsity model. The model of \cite{baraniuk2010model} considered signals with common sparsity structure and small cardinality. Later, \cite{hegde2015nearly} proposed a weighted graph model for graph-structured sparsity and accompanied it with a nearly linear time recovery algorithm. \cite{hegde2015nearly} also analyzed the sufficient number of samples for efficient recovery.

In this paper, we analyze the necessary condition on the sample complexity for exact sparse recovery. While our proof techniques can also be applied to any model-based sparse recovery framework, we apply our method to get the necessary number of samples to perform efficient recovery on a weighted graph model. We provide results for both the noisy and noiseless regimes of compressed sensing. We also extend our results to one-bit compressed sensing. We note that a lower bound on sample complexity was previously  provided in ~\cite{barik2017information} when the observer has access to only the measurements $\by$. Here, we analyze the more relevant setting in which the observer has access to the measurements $\by$ along with the design matrix $\bX$. Table \ref{table:main results cs1} shows a comparison of our information theoretic lower bounds on sample complexity under different settings with the existing upper bounds available in the literature. Note that our bounds for one-bit compressed sensing are tight, while for standard compressed sensing our bounds are tight up to a factor of $\log s$.   

\begin{table}[!h]
	\caption{Sample Complexity Results for Structured Sparsity Models ($d$ is the dimension of the true signal, $s$ is the signal sparsity, i.e., the number of non-zero entries, $g$ is the number of connected components, $\rho(G)$ is the maximum weight degree of graph $G$, $B$ is the weight budget in the weighted graph model, $K$ is the block sparsity, $J$ is the number of entries in a block and $N$ is the total number of blocks in the block structured sparsity model -- detailed explanation of notations are provided in Sections~\ref{sec: results for weighted graph model} and \ref{sec:specific example})}
	\label{table:main results cs1}
	\begin{subtable}{\textwidth}
		\caption{Standard Compressed Sensing}
		\label{table:main results standard cs1}
		\begin{center}
			\small
			\begin{tabular}{lp{2.7cm}p{2.7cm}l}
				\toprule
				Sparsity Structure & Our Lower Bound & Upper Bound & Reference\\
				\midrule
				Weighted Graph Model & $\tilde{\Omega}(s(\log \rho(G) + \log \frac{B}{s}) + g \log \frac{d}{g})$ & $ O(s(\log \rho(G) + \log \frac{B}{s}) + g \log \frac{d}{g})$ & \cite{hegde2015nearly}  \\
				Tree Structured & $\tilde{\Omega}(s)$ & $ O(s)$ & \cite{baraniuk2010model} \\
				Block Structured & $\tilde{\Omega}(KJ + K \log \frac{N}{K})$ & $ O(KJ + K \log \frac{N}{K})$  & \cite{baraniuk2010model} \\
				Regular $s$-sparsity & $\tilde{\Omega}(s \log \frac{d}{s})$ & $O(s \log \frac{d}{s})$ & \cite{rudelson2005geometric}\\
				\bottomrule
			\end{tabular}
		\end{center}
	\end{subtable}
	
	\bigskip
	
	\begin{subtable}{\textwidth}
		\caption{One-bit Compressed Sensing}
		\label{table:main results one-bit cs1}
		\begin{center}
			\small
			\begin{tabular}{lp{2.7cm}p{2.7cm}l}
				\toprule
				Sparsity Structure & Our Lower Bound & Upper Bound & Reference \\
				\midrule
				Weighted Graph Model  & $ \Omega(s(\log \rho(G) + \log \frac{B}{s}) + g \log \frac{d}{g})$ & Not Available & \\
				Tree Structured  & $\Omega(s)$ & Not Available & \\
				Block Structured  & $\Omega(KJ + K \log \frac{N}{K})$ & Not Available & \\
				Regular $s$-sparsity & $\Omega(s \log \frac{d}{s})$ & $O(s \log \frac{d}{s})$ &  \cite{plan2013robust}\\
				\bottomrule
			\end{tabular}
		\end{center}
	\end{subtable}
\end{table}

The paper is organized as follows. We introduce the problem formally in Section~\ref{sec:problem description}. We briefly describe the weighted graph model in Section \ref{sec: results for weighted graph model}. We state our main results in  Section \ref{sec:main result wgm} and extend them to some specific sparsity structures in Section \ref{sec:specific example}. Section~\ref{sec:proof of main result} provides the construction procedure of restricted ensembles and proofs of our main results. Finally, we make our concluding remarks in Section~\ref{sec:conclusion}.  


\section{Problem Description}
\label{sec:problem description}

In this section, we introduce the observation model and later specialize it for specific problems such as standard compressed sensing and one-bit compressed sensing. 

\subsection{Notation}
\label{subsec:notation}

In what follows, we list down the notations which we use throughout the paper. The unobserved true $d$-dimensional signal is denoted by $\bbeta^* \in \real^d$. The inferred signal is represented by $\hat{\bbeta} \in \real^d$. We call a signal $\bbeta \in \real^d, s < d$ an $s$-sparse signal if $\bbeta$ contains only $s$ non-zero entries. The $n$-dimensional observations are denoted by $\by \in \real^n, n \ll d$. We denote the design matrix by $\bX \in \real^{n \times d}$. The $(i,j)$\textsuperscript{th} element of the design matrix is denoted by $\bX_{ij}, \forall\ 1 \leq i \leq n, 1 \leq j \leq d $. The $i$\textsuperscript{th} row of $\bX$ is denoted by $\bX_{i.},  \forall\ 1 \leq i \leq n, $ and the $j$\textsuperscript{th} column of $\bX$ is denoted by $\bX_{.j},  \forall\ 1 \leq j \leq d, $. We assume that the true signal $\bbeta^*$ belongs to a set $\calF$, which is defined more formally later. The number of elements in a set $A$ is denoted by $|A|$. The measurement vector $\by \in \real^n$ is a function $f(\bX \bbeta^* + \be)$ of $\bX, \bbeta^*$ and $\be$ where $\be \in \real^n$ is Gaussian noise with i.i.d. entries, each with mean $0$ and variance $\sigma^2$. The probability of the occurrence of an event $E$ is denoted by $\prob(E)$. The expected value of random variable $A$ is denoted by $\mathbb{E}(A)$. We denote the mutual information between two random variables $A$ and $B$ by $\mI(A; B)$. The Kullback-Leibler (KL) divergence from probability distribution $A$ to probability distribution $B$ (in that order) is denoted by $\KL(A\|B)$. We denote the $\ell_2$-norm of a vector $\ba$ by $\| \ba \|$. We use $\det(\bA)$ to denote the determinant of a square matrix $\bA$. The shorthand notation $[p]$ is used to denote the set $\{1,2,\dots,p\}$. Other notations specific to weighted graph models are defined later in Section~\ref{sec: results for weighted graph model}.

\subsection{Observation Model}
\label{subsec:linear prediction problem}

We define a general observation model. The learning problem is to estimate the unobserved true $s$-sparse signal $\bbeta^*$ from noisy observations. Since $\bbeta^*$ is a high dimensional signal, we do not sample it directly. Rather, we observe a function of its inner product with the rows of a randomized matrix $\bX$. Formally, the $i$\textsuperscript{th} measurement $\by_i$ comes from the below model,
\begin{align*}
\by_i = f(\bX_{i.} \bbeta^* + \be_i)\ ,
\end{align*}
\noindent
where $f:\mathbb{R} \rightarrow \mathbb{R}$ is a fixed function. We observe $n$ such i.i.d. samples and collect them in measurement vector $\by \in \real^n$. We can express this mathematically by,
\begin{align}
\label{eq:linpred all}
\by = f(\bX \bbeta^* + \be)\ ,
\end{align}
\noindent 
where, for clarity, we have overridden $f$ to act on each row of $\bX \bbeta^* + \be$. Our task is to recover an estimate $\hat{\bbeta} \in \mathbb{R}^d$ of $\bbeta^*$ from the observations $(\bX, \by)$. By choosing an appropriate function $f$, we can describe specific instances of compressed sensing.

\subsubsection{Standard Compressed Sensing}
\label{subsubsec:linear regression}

The standard compressed sensing is a special case of equation~\eqref{eq:linpred all} by choosing $f (x) = x$. Then we simply have,
\begin{align}
\label{eq:linreg1}
\by = \bX \bbeta^* + \be \ .
\end{align}
Based on the model given in equation~\eqref{eq:linreg1}, we define our learning problem as follows.
\begin{definition}[Signal Recovery in Standard Compressed Sensing]
	\label{def:standard compressed sensing problem}
	Given that the measurements $\by \in \real^n$ are generated using equation~\eqref{eq:linreg1}, from a design matrix $\bX \in \real^{n \times d}$ and Gaussian noise $\be \in \real^n$, how many observations ($n$) of the form $(\bX, \by)$ are necessary to recover an $s$-sparse signal $\hat\bbeta$ such that,
	\begin{align*}
	\| \hat\bbeta - \bbeta^* \| \leq C \| \be \| \ ,
	\end{align*} 
	for an absolute constant $C > 0$.
\end{definition}
Note that in a noiseless setup, we have that $\be = \bzero$, and thus we essentially want to recover the true signal $\bbeta^*$ exactly. The sample complexity of sparse recovery for the standard compressed sensing has been analyzed in many prior works. In particular, if the design matrix $\bX$ satisfies the Restricted Isometry Property (RIP) then algorithms such as CoSamp~\cite{needell2010cosamp}, Subspace Pursuit (SP)~\cite{dai2008subspace} and Iterative Hard Thresholding (IHT) \cite{blumensath2009iterative} can recover $\bbeta^*$ quite efficiently and in a stable way with a sample complexity of $O(s \log\frac{d}{s})$. Many algorithms use random matrices such as Gaussian (or sub-Gaussian in general) and Bernoulli random matrices because it is known that these matrices satisfy RIP with high probability~\cite{baraniuk2008simple}. 

One can exploit extra information about the sparsity structure to further reduce the sample complexity. \cite{baraniuk2010model} have showed that model-based frameworks which incorporate extra information on the sparsity structure can have sample complexity in the order of $O(\log |\calF|)$ where $\calF$ is number of possible supports in the sparsity model, i.e., the cardinality of the sparsity model. In the same line of work, \cite{hegde2015nearly} proposed a weighted graph based sparsity model which can be used to model many commonly used sparse signals. \cite{hegde2015nearly} also provide a nearly linear time algorithm to efficiently learn $\bbeta^*$.

\subsubsection{One-bit Compressed Sensing}
\label{subsubsec:classification}

The problem of signal recovery in one-bit compressed sensing has been introduced recently \cite{boufounos20081}. In this setup, we do not have access to linear measurements but rather observations come in the form of a single bit. This can be modeled by choosing $f(x) = \sign(x)$ or in other words, we have,
\begin{align}
\label{eq:linreg2}
\by = \sign(\bX \bbeta^* + \be) \ .
\end{align}
\noindent Note that we lose lot of information by limiting the observations to a single bit. It is known that for the noiseless case, unlike standard compressed sensing, one can only recover $\bbeta^*$ up to scaling\cite{plan2013robust}. We define our learning problem in this setting as follows.

\begin{definition}[Signal Recovery in One-bit Compressed Sensing]
	\label{def:one-bit compressed sensing problem}
	Given that the measurements $\by \in \{-1,+1\}^n$ are generated using equation~\eqref{eq:linreg2}, from a design matrix $\bX \in \real^{n \times d}$ and Gaussian noise $\be \in \real^n$, how many observations ($n$) of the form $(\bX, \by)$ are necessary to recover an $s$-sparse signal $\hat\bbeta$ such that,
	\begin{align*}
	\| \frac{\hat\bbeta}{\| \hat\bbeta \|} - \frac{\bbeta^*}{\| \bbeta^* \|} \| \leq \epsilon \ ,
	\end{align*} 
	for some $\epsilon \geq 0$.
\end{definition}
Prior works \cite{plan2013robust,gupta2010sample, ai2014one,gopi2013one} have proposed algorithms and analyzed the sufficient number of samples required for sparse recovery. 

The use of model-based frameworks for one-bit compressed sensing is an open area of research. We provide results assuming that $\bbeta^*$ comes from a weighted graph model. These results naturally extend to the regular $s$-sparse signals with no structures (analyzed in the literature above) as well because the weighted graph model subsumes regular $s$-sparsity. In this way, our approach not only provides results for the current state-of-the-art but also provides impossibility results for algorithms which will possibly be developed in the future for the more sophisticated weighted graph model.      

\subsection{Problem Setting}
\label{subsec:problem setting}

In this paper, we establish a bound on the necessary number of samples needed to infer the sparse signal effectively from a general framework. We assume that the nature picks a true $s$-sparse signal $\bbeta^*$ uniformly at random from a set of signals $\calF$. Then observations are generated using the model described in equation~\eqref{eq:linpred all}. The function $f$ is chosen appropriately for different settings. We also assume that the observer has access to the design matrix $\bX$. Thus, observations are denoted by $(\bX, \by)$. This procedure can be interpreted as a Markov chain which is described below:
\begin{align*}
\bbeta^* \rightarrow \big(\bX, \by = f(\bX \bbeta^* + \be)\big) \rightarrow \hat\bbeta 
\end{align*}

We use the above Markov chain in our proofs. We assume that the true signal $\bbeta^*$ comes from a weighted graph model. We state our results for standard sparse compressed sensing and one-bit compressed sensing. We note that our arguments for establishing information theoretic lower bounds are not algorithm specific. 

A lower bound on the sample complexity for weighted graph models was provided in~\cite{barik2017information}, where the observer does not have access to $\bX$. We analyze the more relevant setting in which the observer has access to $\bX$. Compared to \cite{barik2017information}, we additionally analyze one-bit compressed sensing in detail. We use Fano's inequality~\cite{cover2006elements} to prove our result by carefully constructing restricted ensembles. Any algorithm which infers $\bbeta^*$ from this particular ensemble would require a minimum number of samples. The use of restricted ensembles is customary for information theoretic lower bounds~\cite{santhanam2012information,wang2010information}. 

It is important to mention that results  for efficient recovery in compressed sensing depend on the design matrix satisfying certain properties. We describe this in the next subsection.

\subsection{Restricted Isometry Property (RIP)}
\label{sec:rip}
In compressed sensing, several results (see e.g., \cite{baraniuk2010model,hegde2015nearly}) for efficient recovery require that the design matrix satisfies the Restricted Isometry Property (RIP). We say that a design matrix $\bX \in \mathbb{R}^{n \times d}$ satisfies RIP if there exists a $\delta \in (0, 1)$ such that 
\begin{align*}
(1 - \delta) \|\bbeta\|^2 \leq \|\bX\bbeta\|^2 \leq (1 + \delta) \|\bbeta \|^2, \quad \forall \bbeta \in \mathbb{R}^d
\end{align*}  

Intuitively speaking, one does not want the design matrix to stretch the signal too much in $\ell_2$ norm. Many random matrices satisfy RIP with high probability. In our results, we use Gaussian and Bernoulli design matrices which are proven to satisfy RIP~\cite{baraniuk2008simple}.  For the Gaussian design matrix, the entries of the design matrix are i.i.d. Gaussian with mean $0$ and variance $\frac{1}{n}$. For Bernoulli design matrices the entries are i.i.d. taking values $\frac{1}{\sqrt{n}}$ or $ - \frac{1}{\sqrt{n}}$ with equal probability. 

It is easy to see that for the choices of $\bX$ discussed above, $\|\bX\bbeta\|^2$ concentrates on $\|\bbeta\|^2$ in expectation. That is,
\begin{align*}
\mathbb{E}[\|\bX\bbeta\|^2] = \sum_{i=1}^{n} \sum_{j=1}^{d} \mathbb{E}[(\bX_{ij} \bbeta_j)^2] 
= \frac{1}{n}  \sum_{i=1}^{n} \sum_{j=1}^{d} \bbeta_j^2 
= \|\bbeta\|^2
\end{align*} 

\section{Weighted Graph Model (WGM)}
\label{sec: results for weighted graph model}

We assume that  the true $s$-sparse signal comes from a weighted graph model. This encompasses many commonly seen sparsity patterns in signals such as tree structured sparsity, block structured sparsity as well as the regular $s$-sparsity without any additional structure. Next, we introduce the Weighted Graph Model (WGM) which was proposed by \cite{hegde2015nearly}. We also formally state the sample complexity results from \cite{hegde2015nearly}. 

The Weighted Graph Model is defined on an underlying graph $G = (V,E)$ whose vertices represent the coefficients of the unknown $s-$sparse vector $\bbeta^* \in \real^d$ i.e. $V = [d] = \{1, 2, \dots, d\}$. Moreover, the graph is weighted and thus we introduce a weight function $w : E \rightarrow \mathbb{N}$. Borrowing some notations from \cite{hegde2015nearly},  $w(F)$ denotes the sum of edge weights in a forest $F \subseteq G$, i.e., $w(F) = \sum_{e \in F} w_e$. We also assume an upper bound on the total edge weight which is  called the weight budget and is denoted by $B$. The number of non-zero coefficients of $\bbeta^*$ is denoted by the sparsity parameter $s$. The number of connected components in a forest $F$ is denoted by $g$. The weight-degree $\rho(v)$ of a node $v \in V$ is the largest number of adjacent nodes connected by edges with the same weight, i.e.,
\begin{align*}
\rho(v) = \max_{b \in N} |\{ (v', v) \in E\ | \ w(v',v) = b \}| \ .
\end{align*}  
We define the weight-degree $\rho(G)$ of $G$ to be the maximum weight-degree of any $v \in V$. Next, we define the Weighted Graph Model on coefficients of $\bbeta^*$ as follows:    

\begin{definition}[\cite{hegde2015nearly}]
	\label{def:WGM}
	The $(G, s, g, B)-WGM$ is the set of supports defined as
	\vspace{-1pt}
	\begin{align*}
	\mathbb{M} = \left\lbrace S \subseteq [d] \ | \ |S| = s \ and \ \exists\ F \subseteq G\ with\ V_F = S,  \gamma(F) = g,\ w(F) \leq B \right\rbrace\ ,
	\end{align*}
\end{definition} 
where $\gamma(F)$ is number of connected components in a forest $F$. \cite{hegde2015nearly} provided the following sample complexity result for signal recovery of standard compressed sensing under WGM:
\begin{theorem}[\cite{hegde2015nearly}]
	Let $\bbeta^* \in \real^d$ be in the $(G, s, g, B)-WGM$. Then 
	\begin{align*}
	n = O(s(\log \rho(G) + \log \frac{B}{s}) + g \log \frac{d}{g})
	\end{align*}
	i.i.d. Gaussian observations suffice to estimate $\bbeta^*$. More precisely, let $\be \in \real^n$ be an arbitrary noise vector from equation \eqref{eq:linreg1} and $\bX \in \real^{n \times d}$ be an i.i.d. Gaussian matrix. Then we can efficiently find an estimate $\hat{\bbeta}$ as in Definition~\ref{def:standard compressed sensing problem}, that is, 
	\begin{align*}
	\| \bbeta^* - \hat{\bbeta} \| \leq C \|\be\|\ ,
	\end{align*}
	for an absolute constant $C > 0$.
\end{theorem}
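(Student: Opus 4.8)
The plan is to follow the general model-based compressed sensing template: if the sparsity model $\mathbb{M}$ contains few supports, then an i.i.d.\ Gaussian design matrix satisfies a restricted isometry property over the union of the corresponding $s$-dimensional subspaces with only $O(s + \log|\mathbb{M}|)$ rows, and any algorithm equipped with a projection oracle onto $\mathbb{M}$ recovers $\bbeta^*$ in a stable way. So the result reduces to three steps: (i) counting $|\mathbb{M}|$ for the $(G,s,g,B)$-WGM, (ii) transferring this count into a sample bound via a model-RIP argument, and (iii) running a recovery algorithm whose contraction analysis yields the error bound $\|\bbeta^* - \hat{\bbeta}\| \le C\|\be\|$.

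First I would bound $\log|\mathbb{M}|$, the log-cardinality of the model. A support $S \in \mathbb{M}$ corresponds to a forest $F \subseteq G$ with $V_F = S$, exactly $g$ connected components, $s$ vertices, and total edge weight $w(F) \le B$. I would count such forests by first choosing the $g$ component anchor vertices, contributing a factor $\binom{d}{g} \approx (d/g)^g$ and hence the $g\log(d/g)$ term, and then counting the ways to attach the remaining $s-g$ edges. Each attachment step has at most $\rho(G)$ choices of neighbor at a fixed weight, which produces the $s\log\rho(G)$ term, while the number of ways of distributing the budget $B$ across roughly $s$ edges behaves like $\binom{B}{s}$, giving the $s\log(B/s)$ term. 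This combinatorial estimate is the heart of the argument.

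Next I would invoke the standard concentration bound for i.i.d.\ Gaussian matrices already noted in the excerpt, namely that $\|\bX\bbeta\|^2$ concentrates on $\|\bbeta\|^2$: for any fixed $s$-dimensional subspace, $O(s)$ rows suffice for the isometry $\|\bX\bbeta\|^2 \in (1\pm\delta)\|\bbeta\|^2$ to hold with failure probability exponentially small in $n$. Combining an $\epsilon$-net argument inside each subspace with a union bound over the $|\mathbb{M}|$ subspaces, the matrix satisfies the model-RIP over $\mathbb{M}$ with high probability once $n = O(s + \log|\mathbb{M}|)$, which by the count above equals $O(s(\log\rho(G)+\log\frac{B}{s}) + g\log\frac{d}{g})$.

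Finally, given model-RIP, I would run a model-based iterative scheme whose convergence analysis shows that each iteration contracts the residual by a constant factor determined by $\delta$, producing the stable bound $\|\bbeta^* - \hat{\bbeta}\| \le C\|\be\|$. The one non-routine ingredient is the projection oracle: each iteration must project a vector onto the best support in $\mathbb{M}$, i.e., solve a weight-budget-constrained forest problem, and I expect this to be the main obstacle since exact projection is intractable. Following \cite{hegde2015nearly} I would instead use an approximate projection via a prize-collecting Steiner forest relaxation together with separate head and tail approximation guarantees, which suffices to preserve the constant-factor contraction while keeping the running time nearly linear. Verifying that the approximate oracle still yields the required contraction is the delicate step.
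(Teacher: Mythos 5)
This theorem is not proved in the paper at all: it is imported verbatim from \cite{hegde2015nearly} as background, so there is no internal proof to compare against. Your sketch correctly reconstructs the argument of the cited reference --- the forest-counting bound on $\log|\mathbb{M}|$, the model-RIP for Gaussian matrices with $n = O(s + \log|\mathbb{M}|)$ via a union bound over subspaces, and approximate head/tail projections through the prize-collecting Steiner forest relaxation (where, as you note, the delicate point is that the RIP must be arranged to hold over the slightly enlarged model produced by the approximate oracles so that the iterative scheme still contracts) --- so the proposal is sound and matches the source's approach.
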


Notice that in the noiseless case, that is, when $\be = \bzero$, they recover the true signal $\bbeta^*$ exactly. We prove that information-theoretically, the bound on the sample complexity of standard compressed sensing in  \cite{hegde2015nearly}  is tight up to a logarithmic factor of sparsity. 

\section{Main results}
\label{sec:main result wgm}

In this section, we state our results for the standard compressed sensing and one-bit compressed sensing. We consider both the noisy and noiseless cases.  We establish an information theoretic lower bound on the sample complexity for signal recovery on a WGM. We state our results more formally in the following subsections.

\subsection{Results for Standard Compressed Sensing}
\label{subsec:results for standard compressed sensing}

For standard compressed sensing, the recovery is not exact for the noisy case but it is sufficiently close to the true signal in $\ell_2$-norm with respect to the noise. Our setup, in this case, uses a Gaussian design matrix. The formal statement of our result is as follows.
\begin{theorem}[Standard Compressed Sensing, Noisy Case]
	\label{thm:noisymainresult1}
	There exists a particular $(G, s, g, B)-WGM$, and a particular set of weights for the entries in the support of $\bbeta^*$ such that nature draws a $\bbeta^* \in \mathbb{R}^d$ uniformly at random and produces a data set $ (\bX, \by = \bX \bbeta^* + \be) \in \real^{n \times d + 1} $ of ${n \in \tilde{o}((s-g) (\log \rho(G) + \log \frac{B}{s-g}) + g \log \frac{d}{g} + (s-g) \log \frac{g}{s-g} + s \log 2)}$ i.i.d. observations as defined in equation~\eqref{eq:linreg1} with $\be_i \stackrel{iid}{\sim} \mathcal{N}(0,\sigma^2), \forall i \in \{1 \dots n \}$ then  ${ \prob(\| \bbeta^* - \hat{\bbeta}  \| \geq C \| \be \|) \geq \frac{1}{10}}$ for $0 < C \leq C_0$ irrespective of the procedure we use to infer $\hat\bbeta$ from $\calS$.
\end{theorem}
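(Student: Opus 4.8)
The plan is to exhibit a restricted ensemble $\calF$ of valid $(G,s,g,B)$-WGM signals, put a uniform prior on it, and apply Fano's inequality along the Markov chain $\bbeta^* \to (\bX,\by) \to \hat\bbeta$. The first step is a reduction from the $\ell_2$-recovery guarantee to an identification problem. Suppose, for contradiction, that some estimator satisfies $\prob(\|\bbeta^* - \hat\bbeta\| < C\|\be\|) > 9/10$. Rounding to the nearest ensemble member, $\tilde\bbeta := \argmin_{\bbeta\in\calF}\|\hat\bbeta - \bbeta\|$, recovers $\bbeta^*$ exactly whenever $\|\bbeta^*-\hat\bbeta\| < C\|\be\|$ and the minimum pairwise separation $\min_{\bbeta\ne\bbeta'}\|\bbeta-\bbeta'\|$ exceeds $2C\|\be\|$, by a triangle-inequality argument. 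I would therefore calibrate the common magnitude $a$ of the nonzero entries (as a function of $C$ and $n$) so that, using the concentration of $\|\be\|^2$ around $n\sigma^2$, this separation condition holds with high probability; a good estimator then yields a reliable test and contradicts the Fano lower bound.

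The second step is to engineer $\calF$ so that $\log|\calF|$ reproduces the four terms in the statement. The $g\log\frac{d}{g}$ term arises from choosing the $g$ component roots among the $d$ vertices; $(s-g)\log\frac{g}{s-g}$ from distributing the remaining $s-g$ support vertices among the $g$ components; $(s-g)(\log\rho(G) + \log\frac{B}{s-g})$ from attaching each non-root vertex through an admissible equal-weight edge (at most $\rho(G)$ choices) subject to the budget $B$; and $s\log 2$ from an independent sign choice for each of the $s$ nonzero coefficients. Assigning every nonzero entry the same magnitude $a$ keeps each member inside a single $(G,s,g,B)$-WGM and makes the signals as close to equidistant as the support geometry permits. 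The bookkeeping here is to certify WGM-membership and to check that the counts multiply to an order matching the target expression.

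For the mutual information, I would use $\mI(\bbeta^*;(\bX,\by)) = \mI(\bbeta^*;\by\mid\bX)$ (as $\bX$ is drawn independently of $\bbeta^*$) and the convexity bound $\mI \le \frac{1}{|\calF|^2}\sum_{\bbeta,\bbeta'}\KL(P_{\bbeta}\|P_{\bbeta'})$, where $P_{\bbeta}$ is the law of $\by$ given $\bX$ and $\bbeta$. For the Gaussian channel of equation~\eqref{eq:linreg1}, $\KL(P_{\bbeta}\|P_{\bbeta'}) = \frac{\|\bX(\bbeta-\bbeta')\|^2}{2\sigma^2}$, and taking expectation over the RIP-normalized $\bX$ collapses $\mathbb{E}\|\bX(\bbeta-\bbeta')\|^2$ to $\|\bbeta-\bbeta'\|^2$ by the identity recorded in Section~\ref{sec:rip}. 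Hence $\mI \le \frac{1}{2\sigma^2}\,\mathrm{avg}_{\bbeta,\bbeta'}\|\bbeta-\bbeta'\|^2$. Since the calibration forces the average squared separation to be of order $C^2\sigma^2 n$, this gives $\mI = O(C^2 n)$. Substituting into Fano, $\prob(\tilde\bbeta\ne\bbeta^*) \ge 1 - \frac{\mI + \log 2}{\log|\calF|}$, so taking $n \in \tilde{o}(\log|\calF|)$ with $C_0$ small enough keeps the right-hand side above $\frac{1}{10}$; undoing the reduction yields $\prob(\|\bbeta^*-\hat\bbeta\| \ge C\|\be\|) \ge \frac{1}{10}$.

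The main obstacle is the tension in calibrating $a$: identification needs the \emph{minimum} separation above $2C\|\be\| \approx 2C\sigma\sqrt{n}$, while the mutual-information bound is governed by the \emph{average} squared separation, and two signals on disjoint supports can be a factor $\sqrt{s}$ farther apart than two differing by a single sign flip. If this spread is uncontrolled the average separation inflates $\mI$ and Fano becomes too weak. I expect to resolve this by choosing the support family so that distinct members overlap in a controlled number of coordinates, keeping all pairwise separations within a constant factor of one another, and by handling the sign, root-placement, and edge-attachment degrees of freedom in compatible sub-ensembles so that each of the four terms is certified without any single pair dominating the average. Verifying WGM-membership throughout, and controlling the randomness of $\|\be\|$ through concentration rather than its mean, are the remaining technical points.
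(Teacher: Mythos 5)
Your reduction step (rounding to the nearest ensemble member, calibrating the nonzero magnitude to $\Theta(C_0\sigma\sqrt{n})$, and controlling $\|\be\|$ by concentration) and your ensemble construction (the four count terms) match the paper's proof, which uses the same ensemble $\calF_1$ and its Lemma on separation and noise concentration. The fatal divergence is in the mutual-information step, and it is a genuine gap, not a technicality. With the pairwise convexity bound $\mI \le \frac{1}{|\calF|^2}\sum_{\bbeta,\bbeta'}\KL(P_{\bbeta}\|P_{\bbeta'})$ and $\mathbb{E}\|\bX(\bbeta-\bbeta')\|^2 = \|\bbeta-\bbeta'\|^2$, you get $\mI \le \frac{1}{2\sigma^2}\,\mathrm{avg}\,\|\bbeta-\bbeta'\|^2$, but your claim that the calibration forces this average to be of order $C^2\sigma^2 n$ is wrong. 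The $s\log 2$ term in $\log|\calF|$ requires $2^{\Theta(s)}$ sign/level patterns on a common support, so typical pairs differ in $\Theta(s)$ coordinates; and your own identification reduction forces each per-coordinate gap to be $\Omega(C\sigma\sqrt{n})$ (otherwise misidentification no longer implies $\|\bbeta^*-\hat\bbeta\|\ge C\|\be\|$). Hence the average squared separation is $\Theta(s\,C^2\sigma^2 n)$ and your bound is $\mI = O(C^2 s\, n)$, so Fano only rules out $n = o\bigl(\log|\calF|/(C^2 s)\bigr)$ --- a factor of roughly $s$ short of the theorem's range $n \in \tilde{o}(\log|\calF|)$. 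Shrinking $C_0$ cannot close this, since $C_0$ is an absolute constant independent of $s$.

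The repair you anticipate --- choosing supports with controlled overlap so that all pairwise separations are within a constant factor of the minimum --- is impossible at the required cardinality: if every pair has squared distance $O(C^2\sigma^2 n)$ while per-coordinate gaps are $\Omega(C\sigma\sqrt{n})$, then any two members differ in $O(1)$ coordinates, which caps $\log|\calF|$ at $O(\log d)$ and destroys the $s\log 2$ and $(s-g)\log\rho(G)$ terms entirely. The paper escapes this tension by abandoning the pairwise bound in favor of the mixture-center bound $\mI \le \frac{1}{|\calF_1|}\sum_{\bbeta}\KL(\prob_{\calS|\bbeta}\,\|\,\bbQ)$ with $\bbQ$ \emph{optimized}: it models each sample $(\bX_{i.},\by_i)$ jointly as a $(d+1)$-dimensional Gaussian with covariance $\bSigma_{\bbeta}$, shows the optimal $\bbQ$ is Gaussian with the averaged covariance $\bSigma = \frac{1}{|\calF_1|}\sum_{\bbeta}\bSigma_{\bbeta}$, and evaluates the resulting log-determinants in closed form. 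This converts the linear-in-energy growth of pairwise KLs into a logarithmic one: $\mI \le \frac{n}{2}\log\bigl(1 + \Theta(s C_0^2)\bigr) = O(n\log s)$, which is precisely why the theorem is stated with $\tilde{o}$ (the lower bound is tight up to $\log s$). Without this log-det device (or an equivalent Gaussian-channel-capacity argument giving $\mI \lesssim \frac{n}{2}\log(1+\mathrm{SNR})$), your outline cannot reach the stated sample-complexity threshold.
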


We provide a similar result for the noiseless case. In this case recovery is exact. We use a Bernoulli design matrix for our proofs. In what follows, we state our result.
\begin{theorem}[Standard Compressed Sensing, Noiseless Case]
	\label{thm:mainresult1}
	There exists a particular $(G, s, g, B)-WGM$, and a particular set of weights for the entries in the support of $\bbeta^*$ such that if nature draws a $\bbeta^* \in \real^d$ uniformly at random and produces a data set $ (\bX, \by = \bX \bbeta^*) \in \real^{n \times d + 1} $ of ${n \in \tilde{o}((s-g) (\log \rho(G) + \log \frac{B}{s-g}) + g \log \frac{d}{g} + (s - g) \log \frac{g}{s - g} + s \log 2)}$ i.i.d. observations as defined in equation~\eqref{eq:linreg1} with $\be = \bzero$ then $\prob(\bbeta^* \neq \hat{\bbeta}) \geq \frac{1}{2}$ irrespective of the procedure we use to infer $\hat\bbeta$ from $\calS$.  
\end{theorem}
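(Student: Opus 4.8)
The plan is to combine a carefully constructed restricted ensemble drawn from the WGM with Fano's inequality applied to the Markov chain $\bbeta^* \rightarrow (\bX, \by) \rightarrow \hat\bbeta$. First I would fix a particular graph $G$ together with a family of admissible supports, each realizable as a forest with exactly $g$ connected components, total weight at most $B$, and weight-degree $\rho(G)$; to every node in the support I would then assign one of two values of equal magnitude $v$ and opposite sign. This yields a finite ensemble $\calF$ of pairwise-distinct $s$-sparse signals whose log-cardinality I would show to satisfy $\log|\calF| = \Theta\big((s-g)(\log\rho(G)+\log\tfrac{B}{s-g}) + g\log\tfrac{d}{g} + (s-g)\log\tfrac{g}{s-g} + s\log 2\big)$, where the $s\log 2$ term comes from the two sign choices per nonzero entry and the remaining terms from the combinatorics of the admissible forests. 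Nature draws $\bbeta^*$ uniformly from $\calF$, so that $H(\bbeta^*)=\log|\calF|$.

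Next I would invoke Fano's inequality: for any estimator, $\prob(\bbeta^*\neq\hat\bbeta) \geq 1 - \frac{\mI(\bbeta^*;(\bX,\by))+\log 2}{\log|\calF|}$. Since the Bernoulli design matrix $\bX$ is drawn independently of $\bbeta^*$, the chain rule for mutual information gives $\mI(\bbeta^*;(\bX,\by)) = \mI(\bbeta^*;\by\,|\,\bX)$. In the noiseless regime $\be=\bzero$, so $\by=\bX\bbeta^*$ is a deterministic function of $\bbeta^*$ once $\bX$ is fixed; because $\bX$ is discrete (entries $\pm\tfrac{1}{\sqrt n}$) and $\calF$ is finite, $\by$ is a discrete random vector and hence $\mI(\bbeta^*;\by\,|\,\bX) = H(\by\,|\,\bX) - H(\by\,|\,\bX,\bbeta^*) = H(\by\,|\,\bX)$, the last term vanishing by determinism.

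I would then control this entropy per measurement. By subadditivity of entropy, $H(\by\,|\,\bX)\leq\sum_{i=1}^{n} H(\by_i\,|\,\bX)$. For each $i$, the entry $\by_i=\bX_{i.}\bbeta^*$ is a sum of $s$ terms, each equal to the common magnitude $v$ of the nonzero coefficients times a Bernoulli entry $\pm\tfrac{1}{\sqrt n}$; for any fixed $\bX$ it is therefore a signed sum of $s$ equal magnitudes and can take at most $s+1$ distinct values. Hence $H(\by_i\,|\,\bX)\leq\log(s+1)$, giving $\mI(\bbeta^*;(\bX,\by))\leq n\log(s+1)$. Substituting into Fano, whenever $n\log(s+1)+\log 2 \leq \tfrac12\log|\calF|$ we conclude $\prob(\bbeta^*\neq\hat\bbeta)\geq\tfrac12$; this inequality holds for every $n$ in the stated $\tilde o(\cdot)$ regime, since there the product $n\log(s+1)$ is of strictly smaller order than $\log|\calF|$.

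The main obstacle is the construction and counting of the restricted ensemble: I must exhibit supports that are simultaneously valid members of the $(G,s,g,B)$-WGM and numerous enough that $\log|\calF|$ reproduces all four terms of the target bound. Obtaining the combinatorial count of admissible forests — those respecting the weight-degree $\rho(G)$, the budget $B$, the $g$ components placed among $d$ nodes, and the distribution of the remaining $s-g$ nodes — so that it yields precisely the $(s-g)(\log\rho(G)+\log\frac{B}{s-g})$ and $(s-g)\log\frac{g}{s-g}$ contributions is the delicate step, and it is shared with the ensemble underlying the noisy case of Theorem~\ref{thm:noisymainresult1}. I note finally that the $\log(s+1)$ slack in the per-sample entropy bound is exactly the source of the $\log s$ gap between this lower bound and the matching upper bound of \cite{hegde2015nearly}.
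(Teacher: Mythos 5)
Your proposal is correct and shares the paper's overall skeleton --- a restricted ensemble of $\pm$-signed signals on WGM supports, a lower bound on $\log|\calF|$, and Fano's inequality --- but your mutual information bound takes a genuinely different (and slightly sharper) route. The paper bounds $\mI(\bbeta^*;(\bX,\by))$ via the variational inequality $\mI(\bbeta^*;(\bX,\by)) \leq \frac{n}{|\calF|}\sum_{\bbeta\in\calF}\KL(\prob_{(\bX_i,\by_i)|\bbeta}\,\|\,Q)$ with a product reference distribution $Q_{(\bX_i,\by_i)}=\prob_{\bX_i}Q_{\by_i}$, taking $Q_{\by_i}$ uniform on the output alphabet $\calY$, which it counts as $|\calY|\leq\binom{s+3}{3}\leq e^3s^3/27$, yielding $\mI \leq n\log(e^3s^3/27) = O(n\log s)$. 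You instead argue directly: independence of $\bX$ and $\bbeta^*$ gives $\mI(\bbeta^*;(\bX,\by))=\mI(\bbeta^*;\by\,|\,\bX)=H(\by\,|\,\bX)$ by determinism of $\by=\bX\bbeta^*$, and since all nonzero coefficients share a common magnitude, each $\by_i$ is a signed sum of $s$ equal terms taking at most $s+1$ values, so $\mI\leq n\log(s+1)$. Your bound is tighter in the constant (the paper's alphabet count is loose for the $\{\pm 1\}$ ensemble, where at most $2s+1$ values actually occur) and more elementary; what the paper's KL-to-reference-distribution machinery buys is uniformity, since the same template with a Gaussian reference handles the noisy continuous case (its Lemma~\ref{lem:mi bound wgm noisy}) where the entropy-counting argument does not apply. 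Both bounds are $O(n\log s)$, so the Fano step and the final $\tilde{o}(\cdot)$ threshold come out identically, and you correctly identify this $\log s$ slack as the source of the gap to the upper bound of \cite{hegde2015nearly}. One caveat: you leave the cardinality bound $\log|\calF|=\Theta\big((s-g)(\log\rho(G)+\log\frac{B}{s-g})+g\log\frac{d}{g}+(s-g)\log\frac{g}{s-g}+s\log 2\big)$ as the flagged ``delicate step'' without executing it; this is fair in a blind proposal, and indeed the paper itself does not reprove it but imports exactly this count (its Lemma~\ref{lem:cardinality of restricted ensemble}, with $|\calF|\geq 2^s(\frac{d}{g})^g(\frac{\rho(G)Bg}{2(s-g)^2})^{s-g}$) from \cite{barik2017information}, built on the grouped circular-graph construction you would need to replicate.
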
 

We note that when $s \gg g$ and $B \geq s-g$ then $\tilde{\Omega}((s-g) (\log \rho(G) + \log \frac{B}{s-g}) + g \log \frac{d}{g} + (s - g) \log \frac{g}{s - g} +  s\log 2) $ is roughly $\tilde{\Omega}(s(\log \rho(G) + \log \frac{B}{s}) + g \log \frac{d}{g})$.

\subsection{Results for One-bit Compressed Sensing}
\label{subsec:results for 1-bit}

In this setting, we provide a construction which works for both the noisy and noiseless case. We use a Gaussian design matrix for both of these setups. Our first result in this setting shows that even in our restricted ensemble recovering the true signal exactly is difficult. 
\begin{theorem}[One-bit Compressed Sensing, Exact Recovery]
	\label{thm:mainresult2}
	There exists a particular $(G, s, g, B)-WGM$, and a particular set of weights for the entries in the support of $\bbeta^*$ such that if nature draws a $\bbeta^* \in \real^d$ uniformly at random and produces a data set $ (\bX, \by = \sign( \bX \bbeta^* + \be)  \in \real^{n \times d + 1} $ of $n \in o((s-g) (\log \rho(G) + \log \frac{B}{s-g}) + g \log \frac{d}{g} + (s - g) \log \frac{g}{s - g} + s \log 2)$ i.i.d. observations as defined in equation~\eqref{eq:linreg2} then $\prob(\bbeta^* \neq \hat{\bbeta}) \geq \frac{1}{2}$ irrespective of the procedure we use to infer $\hat\bbeta$ from $\calS$.  
\end{theorem}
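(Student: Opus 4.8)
The plan is to invoke Fano's inequality on the Markov chain $\bbeta^* \rightarrow \calS \rightarrow \hat{\bbeta}$ after building a restricted ensemble $\calF$ of signals that all lie in a single, adversarially chosen $(G,s,g,B)$-WGM. First I would fix the graph $G$ together with the parameters $s,g,B,\rho(G)$ and specify the finite set of admissible magnitude/sign patterns on the support, so that nature's uniform draw of $\bbeta^*$ over $\calF$ satisfies
\begin{align*}
\log |\calF| = \Theta\!\left((s-g)\Big(\log \rho(G) + \log \tfrac{B}{s-g}\Big) + g \log \tfrac{d}{g} + (s-g)\log \tfrac{g}{s-g} + s \log 2\right).
\end{align*}
The four graph-structural terms count the admissible supports in the WGM: distributing the $g$ connected components among the $d$ vertices ($g \log \frac{d}{g}$), allocating the $s-g$ forest edges across the $g$ components ($(s-g)\log \frac{g}{s-g}$), and selecting incident neighbours and edge weights subject to the degree bound $\rho(G)$ and budget $B$ (the $(s-g)(\log\rho(G)+\log\frac{B}{s-g})$ term). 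The extra $s\log 2$ arises from attaching an independent sign to each of the $s$ active coordinates; since one-bit measurements otherwise identify a signal only up to scaling, it is exactly this finite, signed magnitude alphabet that renders exact recovery well-posed and forces distinct members of $\calF$ to be genuinely different signals.

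The second and decisive step is to upper bound the mutual information. Because the design matrix $\bX$ is generated independently of $\bbeta^*$, the chain rule gives $\mI(\bbeta^*;\calS) = \mI(\bbeta^*;\bX) + \mI(\bbeta^*;\by \mid \bX) = \mI(\bbeta^*;\by \mid \bX)$. Now the one-bit structure enters: since $\by \in \{-1,+1\}^n$ under equation~\eqref{eq:linreg2}, its conditional entropy obeys $H(\by \mid \bX) \le n \log 2$, whence
\begin{align*}
\mI(\bbeta^*;\calS) \;=\; \mI(\bbeta^*;\by \mid \bX) \;\le\; H(\by \mid \bX) \;\le\; n \log 2.
\end{align*}
This crude but tight counting bound --- one bit per measurement --- is precisely why the one-bit lower bound carries no spurious $\log s$ factor, and why the theorem states $o(\cdot)$ rather than $\tilde{o}(\cdot)$.

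Finally I would combine these via Fano's inequality. For a uniform prior over $|\calF|$ hypotheses,
\begin{align*}
\prob(\bbeta^* \neq \hat{\bbeta}) \;\ge\; 1 - \frac{\mI(\bbeta^*;\calS) + \log 2}{\log |\calF|} \;\ge\; 1 - \frac{n\log 2 + \log 2}{\log |\calF|},
\end{align*}
so the right-hand side is at least $\tfrac{1}{2}$ as soon as $n \le \tfrac{1}{2}\log_2|\calF| - 1$, i.e. whenever $n \in o(\log|\calF|)$, which is exactly the stated sample regime. The main obstacle is not the information-theoretic machinery --- the entropy bound above is essentially immediate --- but the ensemble construction: I must exhibit an explicit graph, weight assignment, and signed-magnitude alphabet for which every member of $\calF$ is distinct, all members share the same $(s,g,B,\rho(G))$ so that $\calF$ is a legitimate single WGM, and the cardinality count closes to the claimed $\log|\calF|$. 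Reusing the support construction from the standard compressed sensing theorems and overlaying the $2^s$ sign choices is the natural route; the delicate bookkeeping is verifying that this sign overlay neither collapses distinct supports nor violates the WGM constraints.
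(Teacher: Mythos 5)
Your proposal is correct and follows the same overall template as the paper --- a restricted ensemble inside a fixed adversarial WGM, an $O(n)$ bound on mutual information, and Fano's inequality --- but two sub-steps genuinely differ. For the mutual information you argue directly by the chain rule and entropy counting, $\mI(\bbeta^*;\calS)=\mI(\bbeta^*;\by\mid\bX)\leq H(\by\mid\bX)\leq n\log 2$, which is more elementary and in fact tighter than the paper's route: Lemma~\ref{lem:mi bound one bit noisy} bounds $\mI(\bbeta^*;\calS)$ by the average KL divergence to a product reference distribution with $Q_{(\bX_i,\by_i)}=\prob_{\bX_i}Q_{\by_i}$ and $Q_{\by_i}=\text{Bernoulli}(\frac{1}{2})$, obtaining the looser constant $2n\log 2$; both arguments cover the noisy and noiseless cases at once since $\by\in\{-1,+1\}^n$ either way. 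For the ensemble, you overlay all $2^s$ sign patterns on the supports from the standard-CS construction, whereas the paper uses the rigid two-valued family $\calF_3$ of equation~\eqref{eq:F onebit wgm} ($-\epsilon$ on $A$, $\sqrt{2/s}+\epsilon$ on $B$, $|A|=|B|$), which gives only ${s \choose \frac{s}{2}}\geq 2^{\frac{s}{2}}$ value patterns per support (Lemma~\ref{lem:cardinality of restricted ensemble onebit}) and hence the $\frac{s}{2}\log 2$ term in the paper's explicit threshold. Your larger ensemble is perfectly legitimate for this exact-recovery theorem --- the WGM constrains supports only, so the sign overlay can neither violate the model nor collapse distinct vectors, and Fano requires no separation or distinguishability between hypotheses (your worry on this point is unfounded; indistinguishable pairs only increase the error) --- and it even improves the constant from $\frac{s}{2}\log 2$ to $s\log 2$. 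What the paper's $\calF_3$ buys instead is reuse: its members all have unit norm and pairwise distance at least $\epsilon$, which is exactly what the companion approximate-recovery result (Theorem~\ref{thm:noisymainresult2}, via Lemma~\ref{lem:ball covering onebit}) needs, and which your free-sign ensemble would not provide. Finally, your deferral of the support-cardinality bookkeeping to the standard-CS construction mirrors the paper's own practice, which cites \cite{barik2017information} for precisely those counting steps, so there is no gap.
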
 

Our second result provides a bound on the necessary number of samples for approximate signal recovery, which we state formally below. 

\begin{theorem}[One-bit Compressed Sensing, Approximate Recovery]
	\label{thm:noisymainresult2}
	There exists a particular $(G, s, g, B)-WGM$, and a particular set of weights for the entries in the support of $\bbeta^*$ such that if nature draws a $\bbeta^* \in \real^d$ uniformly at random and produces a data set $ (\bX, \by = \sign( \bX \bbeta^* + \be)  \in \real^{n \times d + 1} $ of $n \in o((s-g) (\log \rho(G) + \log \frac{B}{s-g}) + g \log \frac{d}{g} + (s - g) \log \frac{g}{s - g} + s \log 2)$ i.i.d. observations as defined in equation~\eqref{eq:linreg2} then $\prob(\|  \frac{\hat{\bbeta}}{\|\hat{\bbeta}\|}  - \frac{\bbeta^*}{\| \bbeta^* \|} \| \geq \epsilon ) \geq \frac{1}{2}$ for some $\epsilon > 0$ irrespective of the procedure we use to infer $\hat\bbeta$ from $\calS$.  
\end{theorem}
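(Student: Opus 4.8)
The plan is to prove this impossibility result through Fano's inequality applied to a carefully constructed restricted ensemble $\calF$, mirroring the exact-recovery argument for Theorem~\ref{thm:mainresult2}, but with one extra ingredient tailored to approximate recovery: the signals in $\calF$ must be \emph{well-separated on the unit sphere}, so that recovering $\bbeta^*$ to within $\epsilon$ in normalized $\ell_2$ distance forces correct identification of the true hypothesis. Concretely, I would reuse the same $(G,s,g,B)$-WGM and the same signal family underlying Theorem~\ref{thm:mainresult2}, whose cardinality $M=|\calF|$ satisfies $\log M = \Theta\big((s-g)(\log\rho(G)+\log\frac{B}{s-g}) + g\log\frac{d}{g} + (s-g)\log\frac{g}{s-g} + s\log 2\big)$, and then certify that the normalized signals $\{\bbeta/\|\bbeta\| : \bbeta\in\calF\}$ are pairwise separated by at least $2\epsilon$ for a fixed constant $\epsilon>0$ that does not shrink as the dimensions grow.

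The first step is the reduction from approximate recovery to a hypothesis test. Given any estimator $\hat\bbeta$, define the minimum-distance decoder $\psi(\hat\bbeta) = \argmin_{\bbeta\in\calF}\|\frac{\hat\bbeta}{\|\hat\bbeta\|}-\frac{\bbeta}{\|\bbeta\|}\|$. If $\calF$ is $2\epsilon$-separated and the normalized estimation error is below $\epsilon$, the triangle inequality forces $\psi(\hat\bbeta)=\bbeta^*$, so $\{\psi(\hat\bbeta)\neq\bbeta^*\}\subseteq\{\|\frac{\hat\bbeta}{\|\hat\bbeta\|}-\frac{\bbeta^*}{\|\bbeta^*\|}\|\geq\epsilon\}$, and it suffices to lower bound $\prob(\psi(\hat\bbeta)\neq\bbeta^*)$. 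Since $\bbeta^*$ is uniform on $\calF$, Fano's inequality applied to the Markov chain $\bbeta^*\rightarrow(\bX,\by)\rightarrow\hat\bbeta\rightarrow\psi(\hat\bbeta)$ yields
\begin{align*}
\prob(\psi(\hat\bbeta)\neq\bbeta^*)\ \geq\ 1-\frac{\mI(\bbeta^*;(\bX,\by))+\log 2}{\log M}\ ,
\end{align*}
so I need an upper bound on the mutual information that is linear in $n$ and small relative to $\log M$.

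The mutual information bound is where the one-bit setting is especially clean, and it is precisely why the resulting lower bound is tight (no spurious logarithmic factor). Because nature draws $\bbeta^*$ independently of $\bX$, the chain rule gives $\mI(\bbeta^*;(\bX,\by))=\mI(\bbeta^*;\by\mid\bX)$, and since each measurement $\by_i$ takes values in $\{-1,+1\}$ we have the crude but sufficient estimate $\mI(\bbeta^*;\by\mid\bX)\leq H(\by\mid\bX)\leq n\log 2$. Substituting into Fano gives $\prob(\psi(\hat\bbeta)\neq\bbeta^*)\geq 1-\frac{n\log 2+\log 2}{\log M}$, which exceeds $\tfrac12$ whenever $n\in o(\log M)$, i.e., exactly the regime $n\in o\big((s-g)(\log\rho(G)+\log\frac{B}{s-g}) + g\log\frac{d}{g} + (s-g)\log\frac{g}{s-g} + s\log 2\big)$ in the statement. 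Combining with the reduction above delivers the desired conclusion $\prob(\|\frac{\hat\bbeta}{\|\hat\bbeta\|}-\frac{\bbeta^*}{\|\bbeta^*\|}\|\geq\epsilon)\geq\tfrac12$.

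I expect the separation requirement to be the main obstacle, since enlarging an ensemble generically erodes its minimum distance, and here I need $\log M$ to remain at the full target while the normalized minimum distance stays bounded below by a constant. I would resolve this by exploiting the freedom granted in the statement (\emph{a particular} WGM and \emph{a particular} set of weights): assign all nonzero coefficients equal magnitude so every signal lands on the sphere after normalization, and index $\calF$ by (i) a packing of admissible WGM supports with pairwise symmetric difference at least $s/2$, still numerous enough to realize the graph-structured counting terms, together with (ii) a sign pattern on the nonzero entries drawn from a binary code of constant rate and linear minimum Hamming distance $\delta s$, which supplies the $s\log 2$ term. Two signals with differing supports then satisfy $\|\frac{\bbeta}{\|\bbeta\|}-\frac{\bbeta'}{\|\bbeta'\|}\|^2\geq |S\triangle S'|/s\geq\tfrac12$, while two signals sharing a support but differing in $h\geq\delta s$ signs satisfy $\|\frac{\bbeta}{\|\bbeta\|}-\frac{\bbeta'}{\|\bbeta'\|}\|=2\sqrt{h/s}\geq 2\sqrt{\delta}$; either way the separation is a fixed constant. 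The delicate part is constructing the support packing \emph{inside} the WGM constraints---respecting the budget $B$, the $g$ connected components, and the weight-degree $\rho(G)$---so that it simultaneously attains the graph-structured count and the large minimum symmetric difference, an adaptation of the classical constant-weight-code packing to graph-structured sparsity.
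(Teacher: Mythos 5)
Your skeleton---restricted ensemble, Fano's inequality, and a separation argument converting approximate recovery into hypothesis testing---is the same as the paper's, and two of your steps are in fact cleaner than the paper's own: your mutual information bound $\mI(\bbeta^*;(\bX,\by)) = \mI(\bbeta^*;\by\mid\bX) \leq H(\by\mid\bX) \leq n\log 2$ is tighter than the paper's KL-based bound of $2n\log 2$ (Lemma~\ref{lem:mi bound one bit noisy}), and your minimum-distance decoder $\psi$ correctly handles estimators whose output need not lie in the ensemble, whereas the paper's argument conditions on $\hat\bbeta\ne\bbeta^*$ and invokes Lemma~\ref{lem:ball covering onebit}, which assumes $\hat\bbeta\in\calF_3$. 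However, the step you yourself flag as delicate is a genuine gap: your ensemble requires a packing of WGM supports with pairwise symmetric difference $\Omega(s)$ whose cardinality still realizes the full count $g\log\frac{d}{g} + (s-g)\log\frac{\rho(G)Bg}{2(s-g)^2}$ while respecting the budget $B$, the $g$ connected components and the weight degree $\rho(G)$, together with a constant-rate sign code. You never construct this packing, and it is the crux of your version of the argument; as written, the proposal does not prove the theorem.

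The gap is also avoidable, because you imposed a stronger separation requirement than the statement demands. The theorem asserts the bound only ``for some $\epsilon>0$,'' so $\epsilon$ may be tied to the ensemble construction; it need not be a dimension-free constant surviving normalization. The paper exploits this by putting the separation into the coefficient \emph{values} rather than into combinatorics: $\calF_3$ in equation~\eqref{eq:F onebit wgm} takes \emph{every} support $S\in\mathbb{M}$ and every balanced split $S=A\cup B$ with $|A|=|B|=\frac{s}{2}$, assigning $\bbeta_i=-\epsilon$ on $A$ and $\bbeta_i=\sqrt{2/s}+\epsilon$ on $B$. Any two distinct signals then differ in some coordinate by at least $\epsilon$ (by $\sqrt{2/s}+2\epsilon$ when supports coincide, and by at least $\epsilon$ in a coordinate where one of them vanishes otherwise, as in Lemma~\ref{lem:ball covering onebit}), and every signal has the same norm (the paper asserts this norm equals one; what actually matters is only that it is common to all of $\calF_3$), so normalized distance is proportional to raw distance and no support packing or sign code is needed. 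The full cardinality $|\calF_3|\geq 2^{\frac{s}{2}}(\frac{d}{g})^g(\frac{\rho(G)Bg}{2(s-g)^2})^{(s-g)}$ from Lemma~\ref{lem:cardinality of restricted ensemble onebit} then feeds into Fano exactly as in your plan. So either carry out your Gilbert--Varshamov-style packing inside the specific graph of Section~\ref{subsec:restrcited ensemble for WGM} (nontrivial, and unnecessary), or replace that step with the coefficient-offset ensemble, after which your reduction and mutual information bound close the proof.
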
 

\section{Specific Examples}
\label{sec:specific example}
Our proof techniques can be applied to prove lower bounds of the sample complexity for several specific sparsity structures as long as one can bound the cardinality of the model. Below, we provide information theoretic lower bounds on sample complexity for some well-known sparsity structures.

\subsection{Tree-structured sparsity model}
\label{subsec:tree sparsity}
The tree-sparsity model~\cite{baraniuk2010model} is used in several applications such as wavelet decomposition of piecewise smooth signals and images. In this model, one assumes that the coefficients of the $s-$sparse signal form a $k$-ary tree and the support of the sparse signal form a rooted and connected sub-tree on $s$ nodes in this $k-$ary tree. The arrangement is such that if a node is part of this subtree then the parent of such node is also included in the subtree. Let $T$ be a rooted binary tree on $[d]$ nodes. For any node $i$, $\pi_T(i)$ denotes the parent of node $i$ in $T$. Then, a tree-structured sparsity model $\mathbb{M}_{\tree}$ on the tree $T$ is the set of supports defined as  
\begin{align}
\label{eq:M tree}
\mathbb{M}_{\tree} = \{ S \subseteq [d] \ | \ |S| = s, \text{ and } \forall i \in S,\ \pi_T(i) \in S \} .
\end{align}  
The following corollary provides information theoretic lower bounds on the sample complexity for $\mathbb{M}_{\tree}$.
\begin{corollary}
	\label{cor:tree sparse}
	There exists a binary tree-structured sparsity model, such that
	\begin{enumerate}
		\item If $n \in \tilde{o}(s)$ then $\mathbb{P}(\| \bbeta^* - \hat{\bbeta}  \|  \geq C \| e \|) \geq \frac{1}{10}$ for noisy standard compressed sensing.
		\item If $n \in \tilde{o}(s)$ then $\mathbb{P}( \bbeta^* \ne \hat\bbeta ) \geq \frac{1}{2}$ for noiseless standard compressed sensing.
		\item If $n \in o(s)$ then  $\prob(\| \frac{\hat\bbeta}{\| \hat\bbeta \|} - \frac{\bbeta^*}{\| \bbeta^* \|}  \| \geq \epsilon ) \geq \frac{1}{2}$ for one-bit compressed sensing.
		\item If $n \in o(s)$ then  $\prob(\hat{\bbeta \ne \bbeta^*}) \geq \frac{1}{2}$ for one-bit compressed sensing.
	\end{enumerate} 
\end{corollary}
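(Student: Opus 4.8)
The plan is to obtain the corollary by specializing the restricted-ensemble argument behind Theorems~\ref{thm:noisymainresult1}--\ref{thm:noisymainresult2} to the tree model of \eqref{eq:M tree}, exploiting the observation (made at the start of this section) that the technique needs only a bound on the cardinality $|\mathbb{M}_{\tree}|$. First I would realize $\mathbb{M}_{\tree}$ as a $(G,s,g,B)$-WGM: take the underlying graph $G$ to be the binary tree $T$ with unit weight on every edge. Because every support $S \in \mathbb{M}_{\tree}$ is ancestor-closed ($\pi_T(i)\in S$ whenever $i\in S$), it forms a single connected subtree rooted at the root, so $g=1$; such a subtree on $s$ nodes has $s-1$ edges, giving weight budget $B=s-1$; and since each node of a binary tree has at most a parent and two children, the weight-degree obeys $\rho(G)\le 3 = O(1)$. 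Parts~1--4 of the corollary will then follow from Theorems~\ref{thm:noisymainresult1}, \ref{thm:mainresult1}, \ref{thm:noisymainresult2} and \ref{thm:mainresult2} respectively.

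The central quantity to control is $\log|\mathbb{M}_{\tree}|$. I would take $T$ to be a complete binary tree of depth $\Theta(s)$, so that $d = 2^{\Theta(s)}$ and every subtree shape of size $s$ fits. The number of ancestor-closed connected subtrees of size $s$ rooted at the root then equals the number of distinct $s$-node binary tree shapes, the Catalan number $C_s = \Theta(4^s/s^{3/2})$, whence $\log|\mathbb{M}_{\tree}| = \Theta(s)$. Feeding this ensemble of $2^{\Theta(s)}$ supports---together with the same signal weights and the same Gaussian (noisy standard and one-bit) or Bernoulli (noiseless standard) design matrix used in the proofs of the main theorems---into Fano's inequality \cite{cover2006elements} yields the bound $n = \tilde\Omega(s)$ for standard compressed sensing (parts~1--2) and, without the logarithmic slack, $n = \Omega(s)$ for one-bit compressed sensing (parts~3--4). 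Equivalently, recovery fails with the stated probabilities once $n \in \tilde o(s)$, respectively $n \in o(s)$, matching the tree-structured row of Table~\ref{table:main results cs1}.

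I expect the main obstacle to be that the corollary cannot be obtained by mechanically substituting $g=1$, $B=s-1$ and $\rho(G)=O(1)$ into the generic WGM bound of the theorems: at $g=1$ the term $(s-g)\log\frac{g}{s-g}$ degenerates to $-(s-1)\log(s-1)$, a quantity of order $-s\log s$ that would drive the whole expression negative. The correct route is therefore to re-derive the threshold directly from $\log|\mathbb{M}_{\tree}|=\Theta(s)$ rather than from the closed-form expression, which is faithful only in the regime $s\gg g$. Concretely, two points require care: establishing the lower bound $|\mathbb{M}_{\tree}|\ge 2^{\Omega(s)}$, i.e., that a deep enough binary tree admits exponentially many size-$s$ rooted subtree shapes; and checking that the particular weight pattern and the RIP design matrix used inside the theorem proofs remain admissible once the supports are constrained to ancestor-closed connected subtrees with a single component.
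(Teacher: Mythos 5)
Your proposal is correct in substance and shares the paper's skeleton (instantiate the restricted ensembles on $\mathbb{M}_{\tree}$, lower-bound the cardinality, re-run Fano with the same mutual-information lemmas), but you obtain the crucial cardinality bound by a genuinely different counting argument. The paper never counts supports at all: it gets $\log|\calF| \geq cs$ for free from the \emph{value patterns} on the support --- each of the $s$ support coordinates takes one of two values in $\calF_1,\calF_2$, giving at least $2^s$ signals, and the $\binom{s}{s/2} \geq 2^{s/2}$ splittings into $A,B$ do the job for $\calF_3$ --- and then cites the upper bound of \cite{baraniuk2010model} only to remark that $\Theta(s)$ is the right order. This works for any binary tree with $d \geq s$, with no constraint tying $d$ to $s$. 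You instead count \emph{supports}: the number of ancestor-closed size-$s$ subtrees of a deep complete binary tree is the Catalan number $C_s = \Theta(4^s/s^{3/2})$, so $\log|\mathbb{M}_{\tree}| = \Theta(s)$. That count is correct, but it forces $d = 2^{\Theta(s)}$, so your existence instance lives only in the regime $\log d = \Theta(s)$, whereas the paper's construction is free of this restriction; on the other hand, your route shows the $\Omega(s)$ barrier arises from support uncertainty alone, even if one value pattern per support were fixed. Two of your side observations deserve emphasis. First, you are right that the theorems cannot be mechanically instantiated: beyond the degeneracy of $(s-g)\log\frac{g}{s-g}$ at $g=1$, your tree-as-WGM embedding ($g=1$, $B=s-1$, $\rho(G) \leq 3$) violates requirement R2 ($\frac{\rho(G)B}{2(s-g)} \geq \frac{s}{g}-1$ fails for $s>2$), and the theorems are existence statements about the specific Section~\ref{subsec:restrcited ensemble for WGM} construction satisfying R1--R3, so re-deriving the threshold directly from $\log|\calF| = \Theta(s)$ via equation~\eqref{eq:fano} is indeed the only correct move --- and it is also what the paper does. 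Second, your caution about admissibility of the weights and design matrix is warranted but resolvable: Lemmas~\ref{lem:mi bound wgm noiseless} and \ref{lem:mi bound one bit noisy} are support-agnostic, and while the final computation in Lemma~\ref{lem:mi bound wgm noisy} uses the assumption that all coordinates appear in supports with equal frequency $\frac{s}{d}$ (false for the tree, whose root lies in every support), the offending terms $\frac{\sum_i \bar\bbeta_i^2}{n}$ enter with a negative sign and may be dropped, still yielding $\mI(\bbeta^*;\calS) = O(n\log s)$, which suffices for the $\tilde{\Omega}(s)$ thresholds in parts 1--2; the separation arguments of Lemmas~\ref{lemma: bound on error} and \ref{lem:ball covering onebit} are likewise per-pair statements that hold regardless of which supports appear in the ensemble.
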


\subsection{Block sparsity model}
\label{subsec: block sparse}
In the block sparsity model \cite{baraniuk2010model}, an $s-$sparse signal, $\bbeta \in \mathbb{R}^{J \times N}$, can be represented as a matrix with $J$ rows and $N$ columns. The index $ij$ denotes the index of the entry of $\bbeta$ at $i$\textsuperscript{th} row and $j$\textsuperscript{th} column. The support of $\bbeta$ comes from $K$ columns of this matrix such that $s = J K$. More precisely, a block sparsity model $\mathbb{M}_{\block}$ is a set of supports defined as
\begin{align}
\label{eq:M block}
\mathbb{M}_{\block}= \left\lbrace S \subseteq [J \times N] \ | \ \forall i \in [J], j \in L , ij \in S \text{ and } L \subseteq \{1,\dots, N\},\ |L|=K \right\rbrace \ .
\end{align}

\begin{figure}[!t]
	\centering
	\includegraphics[width=1in]{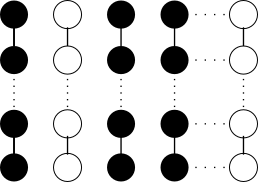}
	\caption{Block sparsity structure as a weighted graph model: nodes are variables, black nodes are selected variables}
	\label{figblock}
\end{figure}
\noindent The above can be modeled as a weighted graph model. In particular, we can construct a graph $G$ over all the entries in $\bbeta$ by treating nodes in the column of the matrix as connected nodes (see Figure~\ref{figblock}). The following corollary provides information theoretic lower bounds on the sample complexity for $\mathbb{M}_{\block}$.

\begin{corollary}
	\label{cor:block sparse}
	There exists a block structured sparsity model, such that
	\begin{enumerate}
		\item If $n \in \tilde{o}(KJ + K \log \frac{N}{K})$ then $\mathbb{P}(\| \bbeta^* - \hat{\bbeta}  \|  \geq C \| e \|) \geq \frac{1}{10}$ for noisy standard compressed sensing.
		\item If $n \in \tilde{o}(KJ + K \log \frac{N}{K})$ then $\mathbb{P}( \bbeta^* \ne \hat\bbeta ) \geq \frac{1}{2}$ for the noiseless standard compressed sensing.
		\item If $n \in o(KJ + K \log \frac{N}{K})$ then  $\prob(\| \frac{\hat\bbeta}{\| \hat\bbeta \|} - \frac{\bbeta^*}{\| \bbeta^* \|}  \| \geq \epsilon ) \geq \frac{1}{2}$ for one-bit compressed sensing.
		\item If $n \in o(KJ + K \log \frac{N}{K})$ then  $\prob( \hat{\bbeta \ne \bbeta^*} ) \geq \frac{1}{2}$ for one-bit compressed sensing.
	\end{enumerate} 
\end{corollary}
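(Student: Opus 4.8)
The plan is to derive Corollary~\ref{cor:block sparse} as a direct specialization of the four main theorems (Theorems~\ref{thm:noisymainresult1}, \ref{thm:mainresult1}, \ref{thm:mainresult2}, and \ref{thm:noisymainresult2}), which already give lower bounds for an arbitrary $(G,s,g,B)$-WGM. The entire task reduces to exhibiting a block sparsity instance as a weighted graph model and then substituting the correct WGM parameters into the generic bound $(s-g)(\log\rho(G)+\log\frac{B}{s-g})+g\log\frac{d}{g}+(s-g)\log\frac{g}{s-g}+s\log 2$. First I would set up the graph $G$ explicitly as in Figure~\ref{figblock}: for each of the $N$ columns, chain the $J$ entries of that column into a path (or any spanning tree on those $J$ nodes), assigning every edge weight $1$. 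The support of a block-sparse signal selects exactly $K$ full columns, so the induced forest $F$ has $V_F=S$ with $|S|=s=JK$, and it has exactly $g=K$ connected components (one per selected column). The total edge weight of $F$ is $w(F)=K(J-1)$, so we may take the weight budget $B=K(J-1)\le KJ$.

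The core of the argument is the parameter substitution. With $s=KJ$, $g=K$, $d=JN$, $B=K(J-1)$, and $\rho(G)=O(1)$ (each node in a path has at most two neighbors of any fixed weight), I would evaluate each term of the generic lower bound. The first term gives $(s-g)\log\rho(G)=O(s-g)=O(K(J-1))=O(KJ)$, and $(s-g)\log\frac{B}{s-g}=K(J-1)\log\frac{K(J-1)}{K(J-1)}=0$ since $B=s-g$ here, so this contributes $O(KJ)$. The second term is $g\log\frac{d}{g}=K\log\frac{JN}{K}$, which dominates the desired $K\log\frac{N}{K}$ and is of the same order up to the benign $\log J$ factor absorbed by $J$ in the $KJ$ term. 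The remaining terms $(s-g)\log\frac{g}{s-g}$ and $s\log 2$ are $O(KJ)$ as well. Summing, the generic bound collapses to $\tilde\Omega(KJ+K\log\frac{N}{K})$, which is exactly what each of the four parts of the corollary asserts. I would then simply invoke each theorem in turn, matching its conclusion ($\prob(\|\bbeta^*-\hat\bbeta\|\ge C\|\be\|)\ge\frac{1}{10}$ for the noisy case, $\prob(\bbeta^*\ne\hat\bbeta)\ge\frac{1}{2}$ for the noiseless and one-bit exact cases, and the normalized-distance bound for one-bit approximate recovery).

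The main obstacle I anticipate is bookkeeping around the $\tilde{O}$ versus $O$ notation and ensuring the block instance genuinely lies inside the restricted ensemble used to prove the theorems, rather than merely inside the WGM definition. The theorems assert existence of a \emph{particular} WGM together with a \emph{particular} set of weights on the support entries; to apply them I must confirm that the block construction is consistent with (or a special case of) that particular ensemble, or alternatively re-run the Fano argument with the block values plugged in at the counting stage. The cleanest route is to verify that the cardinality $|\mathbb{M}_{\block}|=\binom{N}{K}$ used in Fano's inequality yields $\log|\mathbb{M}_{\block}|=\Theta(K\log\frac{N}{K})$, and that the per-component magnitude structure of the signal entries matches what the generic proof requires, so that the mutual-information upper bound and the support-counting lower bound combine to give the claimed sample complexity. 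Once that consistency check is in place, the four statements follow immediately by substitution, with no further analytic work needed.
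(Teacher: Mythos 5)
Your primary route---realizing the block model as a path-graph WGM with $s=KJ$, $g=K$, $d=JN$, $B=K(J-1)$, $\rho(G)=O(1)$ and then ``invoking each theorem in turn''---does not go through. The four theorems are existence statements about the \emph{particular} WGM constructed in Section~\ref{subsec:restrcited ensemble for WGM}, whose cardinality bound (Lemma~\ref{lem:cardinality of restricted ensemble}) depends on requirements R1--R3; your block instance violates R2, which with $B=s-g$ demands $\frac{\rho(G)}{2}\geq \frac{s}{g}-1 = J-1$, i.e.\ $\rho(G)\geq 2(J-1)$, whereas a path with unit weights has $\rho(G)=2$. Moreover, the substitution arithmetic itself is wrong at a decisive point: with your parameters, $(s-g)\log\frac{g}{s-g} = K(J-1)\log\frac{1}{J-1} = -K(J-1)\log(J-1)$, which is \emph{negative} of order $KJ\log J$, not ``$O(KJ)$'' as you assert. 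For even moderate $J$ this negative term swallows the positive $O(KJ)$ contributions ($(s-g)\log\rho(G)$ and $s\log 2$), so the generic expression does not collapse to $\tilde\Omega(KJ+K\log\frac{N}{K})$; this is symptomatic of the fact that the counting formula $(\frac{\rho(G)Bg}{2(s-g)^2})^{s-g}$ is vacuous when $\frac{\rho(G)Bg}{2(s-g)^2}=\frac{1}{J-1}<1$.

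Your fallback---re-running Fano with the block model plugged in at the counting stage---is exactly what the paper does, and it works because the mutual-information bounds (Lemmas~\ref{lem:mi bound wgm noisy}, \ref{lem:mi bound wgm noiseless}, \ref{lem:mi bound one bit noisy}) depend only on the entry values and the sparsity level, not on the support structure. But the count you propose to verify, $\log|\mathbb{M}_{\block}|=\log\binom{N}{K}=\Theta(K\log\frac{N}{K})$, is insufficient: it counts supports, not signals, and by itself can never produce the $KJ$ term in the lower bound. The paper instead counts the restricted ensemble of \emph{signals}: each of the $\binom{N}{K}$ supports carries $2^{KJ}$ value patterns for $\calF_1,\calF_2$ (two admissible values per support entry) and $\binom{KJ}{KJ/2}\geq 2^{KJ/2}$ balanced patterns for $\calF_3$, giving uniformly $|\calF|\geq 2^{\frac{KJ}{2}}\binom{N}{K}\geq 2^{\frac{KJ}{2}}(\frac{N}{K})^K$, hence $\log|\calF|\geq \frac{KJ}{2}\log 2 + K\log\frac{N}{K}$. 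That $2^{\Theta(KJ)}$ factor is the missing idea in your write-up: you gesture at the ``per-component magnitude structure'' but never insert it into the cardinality, and without it all four parts of the corollary would only follow with the weaker threshold $o(K\log\frac{N}{K})$.
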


\subsection{Regular $s$-sparsity model}
\label{subsec: s sparse}
When the model does not have any additional structure besides sparsity, we call it a regular $s$-sparsity model. That is, a regular $s$-sparsity model $\mathbb{M}_{\regular}$ is a set of supports defined as
\begin{align}
\label{eq:M regular}
\mathbb{M}_{\regular} = \{ S \subseteq [d] | |S| = s \}
\end{align}
The following corollary provides information theoretic lower bounds on the sample complexity for $\mathbb{M}_{\regular}$.
\begin{corollary}
	\label{cor:regular sparse}
	There exists a regular $s$-sparsity model, such that
	\begin{enumerate}
		\item If $n \in \tilde{o}(s \log \frac{d}{s})$ then $\mathbb{P}(\| \bbeta^* - \hat{\bbeta}  \|  \geq C \| e \|) \geq \frac{1}{10}$ for noisy standard compressed sensing.
		\item If $n \in \tilde{o}(s \log \frac{d}{s})$ then $\mathbb{P}( \bbeta^* \ne \hat\bbeta ) \geq \frac{1}{2}$ for noiseless standard compressed sensing.
		\item If $n \in o(s \log \frac{d}{s})$ then  $\prob(\| \frac{\hat\bbeta}{\| \hat\bbeta \|} - \frac{\bbeta^*}{\| \bbeta^* \|}  \| \geq \epsilon ) \geq \frac{1}{2}$ for one-bit compressed sensing.
		\item If $n \in o(s \log \frac{d}{s})$ then  $\prob( \hat{\bbeta \ne \bbeta^*} ) \geq \frac{1}{2}$ for one-bit compressed sensing.
	\end{enumerate} 
\end{corollary}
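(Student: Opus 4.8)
The plan is to derive all four bounds by exhibiting $\mathbb{M}_{\regular}$ as a degenerate instance of the $(G,s,g,B)$-WGM and then invoking Theorems~\ref{thm:noisymainresult1}--\ref{thm:noisymainresult2} with the corresponding parameter choice. The key observation is that regular $s$-sparsity imposes no graph structure whatsoever, which corresponds to taking every selected coordinate to be its own connected component, i.e.\ $g = s$. Since the master theorems already assert the existence of a suitable WGM and weight assignment, the work reduces to checking that their hypotheses survive this specialization and then simplifying the common sample-complexity expression.

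First I would make the embedding explicit. Let $G = ([d], \emptyset)$ be the edgeless graph on the $d$ coordinates. Every forest $F \subseteq G$ then consists solely of isolated vertices, so for any $S \subseteq [d]$ with $|S| = s$ the induced forest satisfies $V_F = S$, $\gamma(F) = s$, and total weight $w(F) = 0 \leq B$ for any budget $B \geq 0$. Consequently the set of supports in Definition~\ref{def:WGM} with $g = s$ is exactly $\{ S \subseteq [d] \mid |S| = s\} = \mathbb{M}_{\regular}$, and the restricted ensembles built in the proofs of Theorems~\ref{thm:noisymainresult1}--\ref{thm:noisymainresult2} specialize to ensembles supported on $\mathbb{M}_{\regular}$.

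Next I would substitute $g = s$ into the expression common to all four theorems, namely $(s-g)(\log\rho(G) + \log\tfrac{B}{s-g}) + g\log\tfrac{d}{g} + (s-g)\log\tfrac{g}{s-g} + s\log 2$. With $s - g = 0$ every summand carrying the prefactor $(s-g)$ vanishes under the convention $0\cdot\log(\cdot) = 0$; formally $\lim_{x\to 0^+} x\log(1/x) = 0$ disposes of the $(s-g)\log\tfrac{g}{s-g}$ and $(s-g)\log\tfrac{B}{s-g}$ terms, so the surviving quantity collapses to $s\log\tfrac{d}{s} + s\log 2$. Since $s\log 2 = O(s)$ is dominated by $s\log\tfrac{d}{s}$ in the regime $d \gg s$, the threshold is $\Theta(s\log\tfrac{d}{s})$. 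Feeding this back yields the $\tilde{o}(s\log\tfrac{d}{s})$ thresholds of parts~1 and~2 from Theorems~\ref{thm:noisymainresult1} and~\ref{thm:mainresult1}, and the $o(s\log\tfrac{d}{s})$ thresholds of parts~3 and~4 from Theorems~\ref{thm:noisymainresult2} and~\ref{thm:mainresult2}.

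The main obstacle I anticipate lies entirely in justifying the degenerate limit $s-g=0$, not in any new probabilistic argument: one must confirm that the ensemble construction underlying the master theorems does not secretly require at least one edge (so that $\rho(G)\ge 1$ and $B\ge 1$ are genuinely used), and that the formally singular ratios $\tfrac{B}{s-g}$ and $\tfrac{g}{s-g}$ are harmless precisely because their $(s-g)$ prefactors annihilate them. Should the original construction tacitly assume $g < s$, the clean remedy is to re-derive the Fano bound directly for the edgeless graph, where it reduces to the classical counting estimate $\log\binom{d}{s} = \Theta(s\log\tfrac{d}{s})$ for $|\mathbb{M}_{\regular}|$; this recovers the same threshold and certifies tightness against the upper bounds of \cite{rudelson2005geometric} and \cite{plan2013robust}.
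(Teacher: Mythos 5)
Your primary route---instantiating Theorems~\ref{thm:noisymainresult1}--\ref{thm:noisymainresult2} at a degenerate WGM with $g=s$---has precisely the gap you flagged at the end, and it is a real one: the theorems are \emph{existence} statements (``there exists a particular $(G,s,g,B)$-WGM\dots''), so you cannot pick the edgeless graph yourself and then invoke them; the quantifier runs the other way. Moreover, the construction underlying their proofs (Section~\ref{subsec:restrcited ensemble for WGM}) genuinely presupposes $g<s$: requirements R1 and R2 place $s-g$ in denominators, and the ensemble selects $\frac{s}{g}-1$ additional nodes per group via weighted edges, all of which degenerate at $g=s$ (where also $\rho(G)=0$, so $\log\rho(G)$ is not merely annihilated by its prefactor inside the proofs). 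The convention $0\cdot\log(\cdot)=0$ repairs the final \emph{formula} but not the machinery that produced it, so the substitution is not licensed as stated. (If you insisted on going through the WGM theorems, the legitimate rescue is different: keep $g<s$, e.g.\ $g=s/2$ with $\frac{\rho(G)B}{s-g}\approx\frac{d}{g}$ consistent with R1--R3, note the resulting ensemble consists of $s$-sparse vectors and hence is a restricted ensemble inside $\mathbb{M}_{\regular}$, and check the threshold is again $\Theta(s\log\frac{d}{s})$.)

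Your fallback, however, is exactly the paper's proof, so the proposal as a whole lands correctly: the paper defines the ensembles $\calF_1,\calF_2,\calF_3$ of equations~\eqref{eq:F noisy wgm}--\eqref{eq:F onebit wgm} directly on $\mathbb{M}_{\regular}$, bounds $|\calF|\geq 2^{s/2}\binom{d}{s}\geq 2^{s/2}\frac{d^s}{s^s}$ (the exponent $s/2$ accommodating the one-bit ensemble $\calF_3$, which supplies only $\binom{s}{s/2}\geq 2^{s/2}$ sign patterns per support), and re-runs Fano's inequality~\eqref{eq:fano} with the mutual-information bounds of Lemmas~\ref{lem:mi bound wgm noisy}, \ref{lem:mi bound wgm noiseless} and \ref{lem:mi bound one bit noisy}, which are insensitive to the support structure; Lemmas~\ref{lemma: bound on error} and \ref{lem:ball covering onebit} carry over unchanged for parts 1 and 3. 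Since $\log\bigl(2^{s/2}\binom{d}{s}\bigr)=\Theta(s\log\frac{d}{s})$, this gives the $o(s\log\frac{d}{s})$ thresholds for the one-bit parts and the $\tilde{o}(s\log\frac{d}{s})$ thresholds for standard compressed sensing, where the tilde absorbs the $\log s$-type factors in the denominators coming from the MI bounds. So the fix is to promote your contingency plan to the proof itself and demote the degenerate-WGM substitution to motivation at most.
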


\section{Proof of Main Results}
\label{sec:proof of main result}

In this section, we prove our main results stated in Section~\ref{sec:main result wgm}. We use the Markov chain described in subsection~\ref{subsec:problem setting} in our proofs. We assume that nature picks a true $s$-sparse signal, $\bbeta^* \in \real^d$, uniformly at random from a family of signals, $\calF$. The definition of $\calF$ varies according to the specific setups. Nature then generates independent and identically distributed samples using the true $\bbeta^*$. These samples are of the form $(\bX, \by = f(\bX \bbeta^* + \be))$. We choose an appropriate $f$ and noise $\be$ for the specific setups under analysis. Similarly, the design matrix $\bX$ also varies according to the specific settings. Although, we note that choice of $\bbeta^*$ and $\bX$ are marginally independent in all the settings. 

The outline of the proof is as follows:
\begin{enumerate}
	\item We define a restricted ensemble $\calF$ and establish a lower bound on the number of possible signals $\bbeta$ in $\calF$.
	\item We obtain an upper bound on the mutual information between the true signal $\bbeta^*$ and the observations $(\bX, \by)$.
	\item We use Fano's inequality~\cite{cover2006elements} to obtain an information theoretic lower bound.  
\end{enumerate} 

We explain each of these steps in detail in the subsequent subsections.

\subsection{Restricted ensemble}
\label{subsec:restrcited ensemble for WGM}

First, we need to construct a weighted graph $G$ to define our family of sparse signals $\mathcal{F}$. Our construction of weighted graph follows \cite{barik2017information}. We construct the underlying graph $G$ for the WGM using the following steps:

\begin{figure*}[h]
	\includegraphics[width=\textwidth]{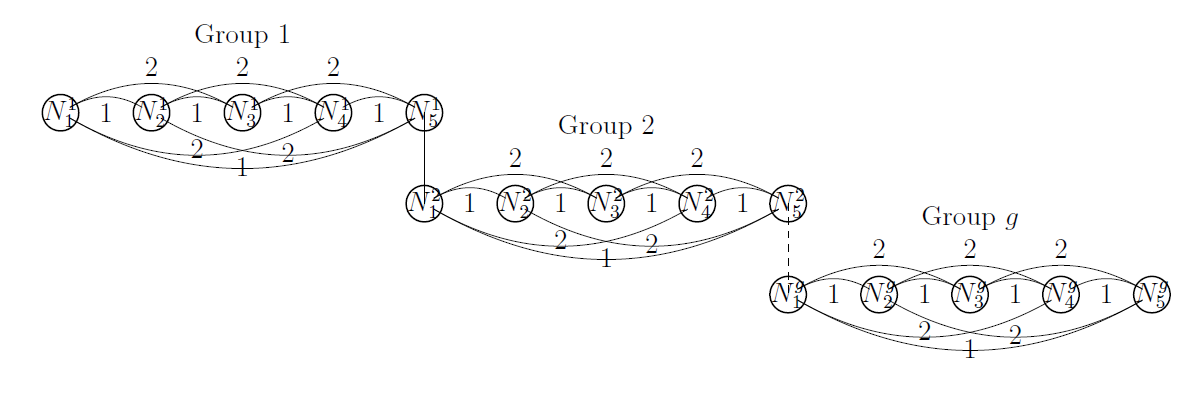}
	\centering \caption{An example of constructing an underlying graph for $\rho(G) = 2$ and $\frac{B}{s-g} = 2$}
	\label{figgraphconst}
\end{figure*}
\begin{figure*}[h]
	\includegraphics[width=\textwidth]{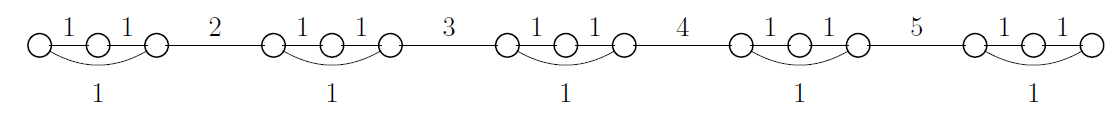}
	\centering\caption{An example of an underlying graph $G$ for $(G, s, g, B)-WGM$ with parameters $d=15, s=10,  g=5, B=5, \rho(G) = 2$}
	\label{figgraphex}
\end{figure*}

\begin{itemize}
	\item We split $d$ nodes equally into $g$ groups with each group having $\frac{d}{g}$ nodes.
	\item For each group $j$, we denote a node by $N_i^j$ where $j$ is the group index and $i$ is the node index. Each group $j$, contains nodes from $N_1^j$ to $N_{\frac{d}{g}}^j$.
	\item We allow for circular indexing within a group, i.e., for any group $j$, a node $N_i^j$ is equivalent to node $N_{i \mod \frac{d}{g}}^j$.  
	\item For each $p = 1,\dots,\frac{B}{s-g}$, node $N_i^j$ has an edge with nodes $N_{i+(p -1)\frac{\rho(G)}{2} + 1}^j$ to $N_{i+p \frac{\rho(G)}{2}}^j$ with weight $p$.
	\item Cross edges between nodes in two different groups are allowed as long as edge weights are greater than $\frac{B}{s-g}$ and this does not affect $\rho(G)$. 
\end{itemize}
Figure~\ref{figgraphconst} shows an example of a graph constructed using the above steps. Furthermore, the parameters of our $WGM$ satisfy the following requirements:
\begin{description}
	\item[R1] $\frac{d}{g} \geq \frac{\rho(G)B}{s-g} + 1$,
	\item[R2] $\frac{\rho(G)B}{2(s-g)} \geq \frac{s}{g} - 1$,
	\item[R3] $B \geq s-g$\ .
\end{description} 
These are quite mild requirements on the parameters and are easy to be fulfilled. To that end, we prove the following.
\begin{proposition}[\cite{barik2017information}]
	\label{prop: mild requirements}
	Given any value of $s, g$ that satisfy R3 (i.e., $B \geq s - g$), there are infinitely many choices for $\rho(G)$ and $d$ that satisfy R1 and R2 and hence, there are infinitely many $(G, s, g, B)$-WGM which follow our construction. 
\end{proposition}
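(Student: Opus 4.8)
The plan is to exploit the decoupled structure of the three requirements: R3 is a hypothesis, R2 constrains only $\rho(G)$ once $s, g, B$ are fixed, and R1 then constrains only $d$ once $\rho(G)$ has been chosen. Since each binding inequality is a one-sided lower bound, I would satisfy them sequentially and argue that each admits infinitely many admissible integer solutions, so that the pairs $(\rho(G), d)$ are infinite in number.

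First I would rearrange R2. Using $\frac{s}{g} - 1 = \frac{s-g}{g}$, requirement R2 is equivalent to $\rho(G) \geq \frac{2(s-g)^2}{Bg}$. The right-hand side is a fixed positive constant once $s, g, B$ are fixed; call it $\rho_0$. Every integer $\rho(G) \geq \rho_0$ then satisfies R2, and there are infinitely many such integers. One can further observe, using R3 in the form $B \geq s-g$, that $\rho_0 \leq \frac{2(s-g)}{g} = 2\left(\frac{s}{g}-1\right)$, so the threshold is modest, although this sharper bound is not needed for the counting argument.

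Next, for any fixed $\rho(G)$ selected above, I would rearrange R1 into $d \geq g\left(\frac{\rho(G) B}{s-g} + 1\right)$. The right-hand side is now a fixed constant, so every integer exceeding it satisfies R1, and again there are infinitely many. Each admissible pair $(\rho(G), d)$ feeds the construction of Section~\ref{subsec:restrcited ensemble for WGM} to produce a graph $G$, hence a $(G, s, g, B)$-WGM satisfying all three requirements; letting $\rho(G)$ and $d$ range over their infinitely many admissible values yields infinitely many such models, which is the claim.

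There is no serious obstacle here; the only care needed concerns integrality and the divisibility conventions baked into the construction—namely that $\frac{\rho(G)}{2}$ must be an integer, so $\rho(G)$ is taken to be a large \emph{even} integer, and that $d$ must be divisible by $g$ so that the $g$ groups each receive $\frac{d}{g}$ nodes, so $d$ is taken to be a large multiple of $g$. Neither constraint affects the cardinality argument, since restricting to an arithmetic progression (even integers, or multiples of $g$) above a finite threshold still leaves infinitely many choices. Finally, the hypothesis R3 (which forces $B \geq s-g > 0$) is precisely what guarantees that the divisions by $s-g$ appearing in R1 and R2 are well defined and that the threshold $\rho_0$ is finite.
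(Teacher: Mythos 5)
Your proof is correct and takes essentially the same route as the source: the paper does not reprove this proposition but defers it to \cite{barik2017information}, where the argument is exactly this sequential-threshold one---rearrange R2 into the lower bound $\rho(G) \ge \frac{2(s-g)^2}{Bg}$ (which R3 caps at $2(\frac{s}{g}-1)$), then for each admissible $\rho(G)$ rearrange R1 into the lower bound $d \ge g\left(\frac{\rho(G)B}{s-g}+1\right)$, and observe that restricting to even $\rho(G)$ and to $d$ a multiple of $g$ still leaves infinitely many integer choices. One cosmetic remark: R3 by itself does not force $s-g>0$; that positivity comes from the model setup ($s>g$, since the $g$ connected components of an $s$-sparse support must together contain $s$ nodes), and it is this, rather than R3, that makes the divisions by $s-g$ in R1 and R2 well defined.
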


Figure~\ref{figgraphex} shows one graph which follows our construction and additionally fulfills R1, R2 and R3. Now that we have defined the underlying weighted graph $G$ for our WGM, we next define the possible coefficients for the true signal in our $(G,s ,g, B)$-WGM. We define the restricted ensemble for each setting in a different fashion. 

\subsubsection{Restricted Ensemble for Noisy Standard Compressed Sensing}
For the noisy case of standard compressed sensing, our restricted ensemble $\calF_1$ on $G$ is as defined as:
\begin{align}
\label{eq:F noisy wgm}
\begin{split} 
\calF_1 &= \left\lbrace \bbeta\ |\ \bbeta_i = 0, \text{ if } i \notin S,\  \bbeta_i \in \left\lbrace \frac{C_0\sigma \sqrt{n}}{\sqrt{2(1 - \epsilon)}},  \frac{C_0\sigma \sqrt{n}}{\sqrt{2(1 - \epsilon)}} + \frac{C_0\sigma \sqrt{n}}{\sqrt{(1 - \epsilon)}} \right\rbrace,  \text{ if } i \in S,\ S \in \mathbb{M} \right\rbrace\ ,
\end{split} 
\end{align}
for some $0 < \epsilon < 1$ and $\mathbb{M}$ is as in Definition~\ref{def:WGM} in our restricted $(G,s ,g, B)$-WGM. 

\subsubsection{Restricted Ensemble for Noiseless Standard Compressed Sensing}
For the noiseless case, we simplify our ensemble as follows:
\begin{align}
\label{eq:F noiseless wgm}
\begin{split} 
\calF_2 &= \left\lbrace \bbeta\ |\ \bbeta_i = 0, \text{ if } i \notin S,\  \bbeta_i \in \left\lbrace -1,  1 \right\rbrace,  \text{ if } i \in S,\ S \in \mathbb{M} \right\rbrace\ ,
\end{split} 
\end{align}
where $\mathbb{M}$ is as in Definition~\ref{def:WGM} in our restricted $(G,s ,g, B)$-WGM.

\subsubsection{Restricted Ensemble for One-bit Compressed Sensing}
For one-bit compressed sensing, we define our ensemble in the following way:
\begin{align}
\label{eq:F onebit wgm}
\begin{split} 
\calF_3 &= \left\lbrace \bbeta\ |\ \bbeta_i = \begin{cases}
-\epsilon, \text{ if } i \in A\\
\sqrt{\frac{2}{s}} + \epsilon, \text{ if } i \in B\\
0, \text{ otherwise}
\end{cases},  |A| = |B|, S = A \cup B, A \cap B = \phi, S \in \mathbb{M} \right\rbrace\ ,
\end{split} 
\end{align}
for some $\epsilon > 0$ and $\mathbb{M}$ is as in Definition~\ref{def:WGM} in our restricted $(G,s ,g, B)$-WGM.

Next, we count the number of elements in $\calF_1,\calF_2$ and $\calF_3$. We provide the results in the following lemmas.

\begin{lemma}[Cardinality of Restricted Ensemble for Standard Compressed Sensing \cite{barik2017information}]
	\label{lem:cardinality of restricted ensemble}
	For any $\calF \in \{\calF_1, \calF_2\}$ as defined in equations~\eqref{eq:F noisy wgm}, \eqref{eq:F noiseless wgm}, we have that $|\calF| \geq  2^s (\frac{d}{g})^g (\frac{\rho(G)Bg}{2(s-g)^2})^{(s-g)}$.
\end{lemma}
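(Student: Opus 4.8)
The plan is to factor the cardinality as $|\calF| = N_{\mathbb{M}} \cdot 2^s$, where $N_{\mathbb{M}}$ denotes the number of supports $S \in \mathbb{M}$ realizable by our construction, and the factor $2^s$ records the two admissible values that each of the $s$ coordinates in $S$ may take. This factorization is immediate from the definitions of $\calF_1$ and $\calF_2$: every signal is pinned down by its support together with an independent binary choice at each of the $s$ support coordinates, and since both admissible values are nonzero, the map (support, assignment) $\mapsto \bbeta$ is injective. Hence it suffices to show $N_{\mathbb{M}} \geq (\frac{d}{g})^g (\frac{\rho(G)Bg}{2(s-g)^2})^{(s-g)}$.

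To lower-bound $N_{\mathbb{M}}$ I would exhibit an explicit sub-family of valid supports built one group at a time as a ``star'' rooted at a chosen node. First, in each of the $g$ groups I would pick a root; since circular indexing makes every one of the $\frac{d}{g}$ positions a legal root, this contributes $(\frac{d}{g})^g$. Next, recall from the construction that a fixed node has, for each weight $p \in \{1, \dots, \frac{B}{s-g}\}$, exactly $\frac{\rho(G)}{2}$ forward neighbors, and that these blocks are disjoint across $p$; hence the root has $\frac{\rho(G)B}{2(s-g)}$ distinct forward neighbors. In each group I would attach $\frac{s}{g}-1$ of these forward neighbors to the root, which is feasible because R2 guarantees $\frac{\rho(G)B}{2(s-g)} \geq \frac{s}{g}-1$. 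The resulting $F$ is a disjoint union of $g$ stars, hence a forest with exactly $g$ connected components and $s = g \cdot \frac{s}{g}$ nodes; each of its $s-g$ edges has weight at most $\frac{B}{s-g}$, so $w(F) \leq (s-g)\frac{B}{s-g} = B$ and therefore $S = V_F \in \mathbb{M}$. This yields $\binom{\frac{\rho(G)B}{2(s-g)}}{\frac{s}{g}-1}$ admissible supports per group.

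Then I would argue that this count involves no over-counting, i.e., that distinct (root, neighbor-set) choices produce distinct supports. The key point is that R1, namely $\frac{d}{g} \geq \frac{\rho(G)B}{s-g}+1$, forces the full forward arc of admissible neighbors to have length strictly less than $\frac{d}{g}$, so within a group no non-root support node can wrap around and qualify as a forward neighbor of another support node terminating at the root. Consequently the root is the unique node of each group's support all of whose companions lie in its forward arc, so the root (and thus the entire labeled star) is recoverable from $S$. Combining across groups gives $N_{\mathbb{M}} \geq (\frac{d}{g})^g \binom{\frac{\rho(G)B}{2(s-g)}}{\frac{s}{g}-1}^g$. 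Finally I would apply the elementary bound $\binom{m}{k} \geq (m/k)^k$ with $m = \frac{\rho(G)B}{2(s-g)}$ and $k = \frac{s-g}{g}$, obtaining $\binom{m}{k}^g \geq (\frac{\rho(G)Bg}{2(s-g)^2})^{s-g}$, which multiplied by $(\frac{d}{g})^g$ gives exactly the claimed bound.

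The main obstacle I anticipate is the bookkeeping around R1--R3 rather than any single deep step: I must verify that the forward arcs do not wrap around and collide (so the $\frac{\rho(G)B}{2(s-g)}$ forward neighbors are genuinely distinct from the root and from each other), that each star stays within a single group so the component count is exactly $g$, and that the weight budget is met -- hinging on R1, R2, R3 respectively. The distinctness argument is the most delicate piece, since a careless count would conflate symmetric supports and only yield an upper bound on $N_{\mathbb{M}}$; identifying the root as the unique ``arc-minimal'' node of each group's support is what makes the lower bound rigorous. I would also implicitly take $g \mid s$ and $g \mid d$ for the balanced construction, absorbing any rounding into constants.
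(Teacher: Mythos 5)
Your proposal is correct and follows essentially the same route as the paper's proof (inherited from \cite{barik2017information}): count supports by choosing a root in each group ($(\frac{d}{g})^g$ ways) and $\frac{s}{g}-1$ forward neighbors per group ($\binom{\rho(G)B/(2(s-g))}{\frac{s}{g}-1}^g$ ways), multiply by the $2^s$ value assignments, and finish with $\binom{m}{k} \geq (m/k)^k$. Your explicit injectivity argument -- recovering the root from the support via the non-wrapping guaranteed by R1 -- is a careful spelling-out of a step the paper leaves implicit, and it is sound.
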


\begin{lemma}[Cardinality of Restricted Ensemble for One-bit Compressed Sensing]
	\label{lem:cardinality of restricted ensemble onebit}
	For $\calF_3$ as defined in equation~\eqref{eq:F onebit wgm}, $|\calF_3| \geq  2^{\frac{s}{2}} (\frac{d}{g})^g (\frac{\rho(G)Bg}{2(s-g)^2})^{(s-g)}$.
\end{lemma}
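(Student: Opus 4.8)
The plan is to reuse Lemma~\ref{lem:cardinality of restricted ensemble} by observing that $\calF_3$ is constructed over \emph{exactly the same} support family $\mathbb{M}$ (on the same graph $G$) as $\calF_1$ and $\calF_2$; the three ensembles differ only in how values are assigned to the $s$ support coordinates. Hence the count factorizes as $|\calF_3| = |\mathbb{M}| \cdot m_3$, where $m_3$ is the number of admissible value assignments per support, and the whole task reduces to (a) extracting a lower bound on $|\mathbb{M}|$ from Lemma~\ref{lem:cardinality of restricted ensemble}, and (b) computing $m_3$.

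First I would pin down $|\mathbb{M}|$. For $\calF_1$ each of the $s$ support entries independently takes one of two nonzero values, so every support yields exactly $2^s$ distinct signals (distinct because all entries are nonzero, so both the support and the value pattern are recoverable from $\bbeta$), giving $|\calF_1| = |\mathbb{M}|\,2^s$. Comparing with Lemma~\ref{lem:cardinality of restricted ensemble} yields $|\mathbb{M}| \geq (\frac{d}{g})^g (\frac{\rho(G)Bg}{2(s-g)^2})^{(s-g)}$ after cancelling the common $2^s$ factor.

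Next I would count $m_3$. Fixing a support $S \in \mathbb{M}$ with $|S| = s$, the constraint $|A| = |B|$ together with $A \cap B = \phi$ and $A \cup B = S$ forces $s$ to be even and $|A| = |B| = s/2$; a signal in $\calF_3$ on $S$ is then exactly an ordered bipartition $(A, B)$ of $S$ into halves, on which $\bbeta$ takes the two distinct nonzero values $-\epsilon$ and $\sqrt{2/s} + \epsilon$. Since these values are distinct and nonzero, the map $\bbeta \mapsto (A, B)$ is injective and signals on different supports cannot coincide, so $m_3 = \binom{s}{s/2}$. Finally I would invoke the elementary estimate $\binom{s}{s/2} = \prod_{i=1}^{s/2} \frac{s/2 + i}{i} \geq 2^{s/2}$ (each factor is at least $2$ because $s/2 + i \geq 2i$ for $i \leq s/2$), and multiply the two bounds to obtain $|\calF_3| = |\mathbb{M}| \binom{s}{s/2} \geq 2^{s/2} (\frac{d}{g})^g (\frac{\rho(G)Bg}{2(s-g)^2})^{(s-g)}$.

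I do not expect a genuine obstacle here: the argument is essentially a change of the per-support multiplicity from $2^s$ (for $\calF_1$) to $\binom{s}{s/2}$ (for $\calF_3$). The only points requiring care are the clean extraction of the support-count factor from Lemma~\ref{lem:cardinality of restricted ensemble} (justifying that the $2^s$ there is purely the value multiplicity, so the remaining factor bounds $|\mathbb{M}|$) and the combinatorial inequality $\binom{s}{s/2} \geq 2^{s/2}$; the implicit parity assumption that $s$ is even is inherited from $|A| = |B|$ and is harmless for an asymptotic lower bound.
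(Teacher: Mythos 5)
Your proposal is correct and takes essentially the same approach as the paper: the paper likewise keeps the support count from Lemma~\ref{lem:cardinality of restricted ensemble} unchanged and replaces the per-support value multiplicity $2^s$ by the $\binom{s}{s/2}$ ways of choosing the $s/2$ entries of $A$ within each support, then applies $\binom{s}{s/2} \geq 2^{s/2}$. The only cosmetic difference is that you extract the support-count factor from the statement of Lemma~\ref{lem:cardinality of restricted ensemble} via the exact factorization $|\calF_1| = 2^s |\mathbb{M}|$, whereas the paper simply reuses steps 1--2 of the construction from \cite{barik2017information}.
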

\begin{proof}
	The proof follows a similar approach as in Lemma \ref{lem:cardinality of restricted ensemble}. Steps 1 and 2 (See \cite{barik2017information}) are the same. For step 3, we choose $\frac{s}{2}$ entries in the support of $\bbeta$ in ${s \choose \frac{s}{2}}$ ways. Thus, 
	\begin{align*}
	\begin{split}
	|\calF_3| &\geq {s \choose \frac{s}{2}} (\frac{d}{g})^g ({\frac{\rho(G)B}{2(s - g)} \choose \frac{s}{g} -1 })^g \\
	&\geq 2^{\frac{s}{2}} (\frac{d}{g})^g (\frac{\rho(G)Bg}{2(s-g)^2})^{(s-g)} \ . 
	\end{split}
	\end{align*}
\end{proof}

%
%

\subsection{Bound on Mutual Information}
\label{subsec:bound on mutual info weighted graph model}

In this subsection, we provide an upper bound on the mutual information between the $s$-sparse signal $\bbeta^* \in \calF$ and the observations $(\bX, \by)$. We provide three lemmas, one for each of the restricted ensembles defined in equations \eqref{eq:F noisy wgm}, \eqref{eq:F noiseless wgm} and \eqref{eq:F onebit wgm}. First, we analyze the noisy case of standard compressed sensing.

\begin{lemma}
	\label{lem:mi bound wgm noisy}
	Let $\bbeta^*$ be chosen uniformly at random from $\calF_1$ defined in equation~\eqref{eq:F noisy wgm} then,
	\begin{align*}
	\mI(\bbeta^*; (\bX, \by)) \leq \frac{n}{2} \log (1 + \frac{s}{2} k_1^2 + \frac{s}{2} k_2^2 - \frac{s^2}{d} \frac{k_1^2}{4} - \frac{s^2}{d} \frac{k_2^2}{4}  - \frac{s^2}{d} \frac{k_1 k_2}{2})  \ ,
	\end{align*}
	for some constants $k_1$ and $k_2$ independent of $n, d$ and $s$ and where $(\bX, \by)$ is generated using the noisy setup defined in equation \eqref{eq:linreg1}.
\end{lemma}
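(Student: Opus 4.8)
The plan is to exploit the marginal independence of $\bbeta^*$ and $\bX$ to pass to a conditional mutual information, and then to replace the awkward mixture law of $\by$ by a Gaussian surrogate. Since $\bbeta^*$ and $\bX$ are independent, the chain rule gives $\mI(\bbeta^*; (\bX, \by)) = \mI(\bbeta^*; \bX) + \mI(\bbeta^*; \by \mid \bX) = \mI(\bbeta^*; \by \mid \bX)$, which I would write as a difference of differential entropies $h(\by \mid \bX) - h(\by \mid \bbeta^*, \bX)$. The subtracted term is exact: conditioned on both $\bbeta^*$ and $\bX$, the vector $\by = \bX\bbeta^* + \be$ is a shifted Gaussian $\mathcal{N}(\bX\bbeta^*, \sigma^2 \bI)$, so $h(\by \mid \bbeta^*, \bX) = \tfrac{n}{2}\log(2\pi e\, \sigma^2)$.

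For the first term I would invoke the maximum-entropy property of the Gaussian together with the concavity of $\log\det$. For each realization $\bX$, the conditional law of $\by$ is a finite mixture of the Gaussians $\mathcal{N}(\bX\bbeta, \sigma^2\bI)$ over $\bbeta \in \calF_1$, whose entropy is at most that of a Gaussian with the same covariance, namely $h(\by \mid \bX) \le \tfrac12\log\big((2\pi e)^n \det \bSigma_{\bX}\big)$ with $\bSigma_{\bX} = \bX\, \bSigma_\beta\,\bX^\T + \sigma^2\bI$ and $\bSigma_\beta := \mathrm{Cov}(\bbeta^*)$. Averaging over $\bX$ and applying Jensen's inequality to the concave map $\bSigma \mapsto \log\det\bSigma$ gives $\mathbb{E}_{\bX}[\log\det\bSigma_{\bX}] \le \log\det \mathbb{E}_{\bX}[\bSigma_{\bX}]$. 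Because the entries of $\bX$ are i.i.d. $\mathcal{N}(0,1/n)$, a direct second-moment computation collapses the expectation to a multiple of the identity, $\mathbb{E}_{\bX}[\bX\bSigma_\beta\bX^\T] = \tfrac{1}{n}\mathrm{tr}(\bSigma_\beta)\,\bI$, so $\mathbb{E}_{\bX}[\bSigma_{\bX}] = \big(\sigma^2 + \tfrac{1}{n}\mathrm{tr}(\bSigma_\beta)\big)\bI$. Combining the two entropy estimates cancels the common $\log(2\pi e\sigma^2)$ factors and yields $\mI(\bbeta^*;(\bX,\by)) \le \tfrac{n}{2}\log\big(1 + \tfrac{\mathrm{tr}(\bSigma_\beta)}{n\sigma^2}\big)$; note that only the trace of $\bSigma_\beta$ ever enters, so the off-diagonal correlations induced by the support selection are irrelevant.

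It then remains to evaluate $\mathrm{tr}(\bSigma_\beta) = \sum_{j=1}^d \mathrm{Var}(\bbeta^*_j)$, for which only the marginal law of each coordinate matters. Writing the two nonzero levels of $\calF_1$ as $k_1\sigma\sqrt{n}$ and $k_2\sigma\sqrt{n}$ with $k_1 = C_0/\sqrt{2(1-\epsilon)}$ and $k_2 = k_1 + C_0/\sqrt{1-\epsilon}$ (both independent of $n,d,s$), and letting $p_j = \prob(j \in S)$, uniformity over $\calF_1$ makes each level equally likely given $j \in S$, so $\mathbb{E}[(\bbeta^*_j)^2] = \tfrac{p_j}{2}(k_1^2+k_2^2)\sigma^2 n$ and $\mathbb{E}[\bbeta^*_j] = \tfrac{p_j}{2}(k_1+k_2)\sigma\sqrt{n}$. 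Summing the variances and using $\sum_j p_j = s$ (every support has size $s$) together with $\sum_j p_j^2 \ge s^2/d$ (Cauchy--Schwarz) gives $\mathrm{tr}(\bSigma_\beta)/(n\sigma^2) \le \tfrac{s}{2}(k_1^2+k_2^2) - \tfrac{s^2}{4d}(k_1+k_2)^2$, whose expansion is exactly the argument of the logarithm in the statement. Substituting this into the mutual-information bound finishes the proof.

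The main obstacle is keeping the variance bookkeeping pointed in the right direction: the term $-\tfrac{(k_1+k_2)^2}{4}\sum_j p_j^2$ carries a negative sign, so the inequality $\sum_j p_j^2 \ge s^2/d$ must be applied so that it still produces an \emph{upper} bound on $\mathrm{tr}(\bSigma_\beta)$, with equality precisely when the construction is coordinate-symmetric, i.e. $p_j \equiv s/d$. The only other delicate point is justifying the interchange of expectation with the entropy and $\log\det$ estimates, which is exactly where the Gaussian maximum-entropy bound and Jensen's inequality for the concave $\log\det$ are used.
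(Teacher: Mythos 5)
Your proof is correct, and it takes a genuinely different route from the paper's. The paper bounds $\mI(\bbeta^*;(\bX,\by))$ via the mixture-KL inequality $\mI(\bbeta^*;\calS) \leq \frac{1}{|\calF_1|}\sum_{\bbeta}\KL(\prob_{\calS|\bbeta}\|\bbQ)$ with a product reference measure, works per sample with the joint $(d+1)$-dimensional Gaussian law of $(\bX_{i.},\by_i)$, explicitly optimizes the reference covariance (obtaining $\bSigma = \frac{1}{|\calF_1|}\sum_\bbeta \bSigma_\bbeta$), and evaluates both determinants by Schur complements; the cross-terms $\bbeta_j/n$ in those $(d+1)\times(d+1)$ covariances are what produce the $\frac{s^2}{d}$ corrections. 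You instead use the chain rule with the marginal independence of $\bbeta^*$ and $\bX$ to reduce to $h(\by\mid\bX) - h(\by\mid\bbeta^*,\bX)$, where the subtracted term is exact, and then apply the Gaussian maximum-entropy bound to the mixture law of $\by$ given $\bX$ followed by Jensen on the concave $\log\det$, collapsing everything to $\frac{n}{2}\log\bigl(1+\frac{\mathrm{tr}(\mathrm{Cov}(\bbeta^*))}{n\sigma^2}\bigr)$ via $\mathbb{E}_{\bX}[\bX\bSigma_\beta\bX^\T] = \frac{1}{n}\mathrm{tr}(\bSigma_\beta)\bI$. The two computations land on the identical final expression, since the paper's optimized-$Q$ bound and your entropy decomposition are evaluating the same Gaussian surrogate from two sides. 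What your route buys: you avoid choosing and optimizing a reference distribution and avoid the $(d+1)$-dimensional determinant bookkeeping, and your Cauchy--Schwarz step $\sum_j p_j^2 \geq s^2/d$ makes the bound valid for \emph{any} marginal support probabilities with $\sum_j p_j = s$, whereas the paper assumes exact coordinate symmetry ($p_j \equiv s/d$, ``all the $\bbeta_i$ are treated equally''), which is an idealization of the WGM construction --- so your version is mildly stronger, and you correctly flag that the negative sign on $-\frac{(k_1+k_2)^2}{4}\sum_j p_j^2$ forces the inequality in that direction. What the paper's route buys: the per-sample KL-to-reference template is reused verbatim for their discrete settings (the noiseless Bernoulli-design lemma and both one-bit lemmas), where conditioning on $\bX$ and invoking differential-entropy arguments would be less uniform.
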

\begin{proof}
	Note that $\by_i = \bX_{i.} \bbeta + \be_i, \forall i \in \{1,\dots,n\}$. Furthermore, in the noisy setup we choose the Gaussian design matrix. That is, each $\bX_{ij}, \forall i \in [n], j \in [d]$ is drawn independently from $\mathcal{N}(0, \frac{1}{n})$. First, we show that for a given $\bbeta$, $(\bX_{i.}, \by_i) \in \real^{d + 1}, \forall i \in \{1,\dots,n\}$ follows multivariate normal distribution $\mathcal{N}(\bmu, \bSigma_{\bbeta})$ where,
	\begin{align}
	\label{eq:mean and sigma joint normal}
	\begin{split}
	&\bmu = \bzero \\
	&\bSigma_{\bbeta} = \begin{bmatrix} \frac{1}{n} & 0 & \dots & \frac{\bbeta_1}{n} \\
	0 & \frac{1}{n} & \dots & \frac{\bbeta_2}{n} \\
	\vdots & \vdots & \vdots & \vdots \\
	\frac{\bbeta_1}{n} & \frac{\bbeta_2}{n} & \dots & \frac{\|\bbeta\|_2^2}{n} + \sigma^2
	\end{bmatrix}
	\end{split}
	\end{align} 
	It can be easily verified that $\ba^\T (\bX_{i.}, \by_i)$ follows normal distribution for any given $\ba \in \real^{d+1}$. This implies that $(\bX_i, \by_i)$ follows a multivariate normal distribution. Second, since each $\bX_{ij} \sim \mathcal{N}(0, \frac{1}{n})$ and $\be_i \sim \mathcal{N}(0, \sigma^2)$, thus $\bmu = \mathbb{E}(\bX_i, \by_i) = \bzero \in \real^{d+1}$. To compute the covariance matrix, we recall that $\bX_{ij}$ are independently distributed. Therefore $\text{Cov}(\bX_{ij}, \bX_{ik}) = 0, \forall j \ne k$. Thus,
	\begin{align*}
	\text{Cov}(\bX_{ij}, \by_i) = \text{Cov}(\bX_{ij}, \sum_{j=1}^d \bbeta_j \bX_{ij} + \be_i) = \frac{\bbeta_j}{n} \\
	\text{Cov}(\by_i, \by_i) = \frac{\|\bbeta\|_2^2}{n} + \sigma^2
	\end{align*}
	Note that for any arbitrary distribution $\bbQ$ over $\calS$, the following inequality holds (See equation 5.1.4 in \cite{duchifano}).	
	\begin{align}
	\label{eq:mutual information inequality}
	\mI(\bbeta^*,\calS) \leq \frac{1}{|\calF_1|} \sum_{\bbeta \in \calF_1} \KL(\prob_{\calS|\bbeta} \| \bbQ)
	\end{align}
	We choose a $\bbQ$ which decomposes in the following way:
	\begin{align*}
	\bbQ = Q^n
	\end{align*}
	Using the independence of samples and factorization of $\bbQ$, we can write equation \eqref{eq:mutual information inequality} as,
	\begin{align}
	\label{eq:mI 2}
	\mI(\bbeta^*,\calS) \leq \frac{n}{|\calF_1|} \sum_{\bbeta \in \calF_1} \KL(\prob_{(\bX_i, \by_i)|\bbeta} \| Q)
	\end{align}
	Recall that $(\bX_i, \by_i) | \bbeta \sim \mathcal{N}(\bzero, \bSigma_{\bbeta}), \forall \bbeta \in \calF_1$ where $\bSigma_{\bbeta}$ is computed according to equation~\eqref{eq:mean and sigma joint normal}. Let $Q \sim \mathcal{N}(\bzero, \bSigma)$. By the KL divergence between two multivariate normal distributions, we can write equation \eqref{eq:mI 2} as,
	\begin{align}
	\label{eq:mI 3}
	\mI(\bbeta^*,\calS) \leq \frac{n}{2} \frac{1}{|\calF_1|} \sum_{\bbeta \in \calF_1}  [\log \frac{\det(\bSigma)}{\det(\bSigma_{\bbeta})} - d -1 + \text{tr}(\bSigma^{-1}\bSigma_{\bbeta}) ]
	\end{align}
	We then choose the covariance matrix $\bSigma$ which minimizes equation \eqref{eq:mI 3}. 
	
	\begin{align*}
	\begin{split}
	\bSigma = \argmin_{\bSigma} \frac{n}{2} \frac{1}{|\calF_1|} \sum_{\bbeta \in \calF_1}  [\log \frac{\det(\bSigma)}{\det(\bSigma_{\bbeta})} - d -1 + \text{tr}(\bSigma^{-1}\bSigma_{\bbeta}) ] 
	\end{split}
	\end{align*} 
	We solve the above equation for a positive definite covariance matrix $\bSigma$. This can be easily done by taking the derivative of the equation and equating it to zero. That is,
	\begin{align*}
	\begin{split}
	&\sum_{\bbeta \in \calF_1}  \bSigma^{-1} - \bSigma^{-1} \bSigma_{\bbeta} \bSigma^{-1}  = \bzero \\
	& \bSigma \big[ \sum_{\bbeta \in \calF_1}  \bSigma^{-1} - \bSigma^{-1} \bSigma_{\bbeta} \bSigma^{-1} \big] \bSigma = \bzero \\
	\end{split}
	\end{align*} 
	and therefore:
	\begin{align}
	\label{eq:min sigma}
	\bSigma = \frac{1}{|\calF_1|} \sum_{\bbeta \in \calF_1} \bSigma_{\bbeta}
	\end{align} 
	Substituting value of $\bSigma$ from equation \eqref{eq:min sigma} to equation \eqref{eq:mI 3}, we get:	
	\begin{align}
	\label{eq:mI 4}
	\mI(\bbeta^*,\calS) \leq \frac{n}{2} [\log \det(\frac{1}{|\calF_1|} \sum_{\bbeta \in \calF_1} \bSigma_{\bbeta} ) - \frac{1}{|\calF_1|} \sum_{\bbeta \in \calF_1}  \log \det(\bSigma_{\bbeta}) ]
	\end{align}
	Next, we compute the determinant of the above covariance matrices. 
	
	\noindent {\bf Computing determinant of covariance matrix $\bSigma_{\bbeta}$.} Note that for a block matrix,
	
	\begin{align}
	\label{eq:block matrix determinant}
	\det\big(\begin{bmatrix}
	\bA & \bB \\
	\bC & \bD 
	\end{bmatrix}\big) = \det(\bA) \det(\bD - \bC \bA^{-1} \bB)
	\end{align}   
	provided that $\bA$ is invertible. Note that $\bSigma_{\bbeta} = \begin{bmatrix}
	\bA & \bB \\
	\bC & \bD 
	\end{bmatrix}$ where,
	
	\begin{align*}
	\begin{split}
	&\bA = \text{diag}(\frac{1}{n}) \in \real^{d\times d} \\
	&\bB = \begin{bmatrix}
	\frac{\bbeta_1}{n} \\
	\vdots \\
	\frac{\bbeta_d}{n}
	\end{bmatrix} \in \real^{d \times 1} \\
	&\bC = \bB^\T \in \real^{1 \times d} \\
	&\bD = \frac{\|\bbeta\|_2^2}{n} + \sigma^2 \in \real^{1\times 1}
	\end{split}
	\end{align*}
	Using equation \eqref{eq:block matrix determinant}, it follows that,
	
	\begin{align*}
	\det(\bSigma_{\bbeta}) &= \frac{1}{n^d} (\frac{\|\bbeta\|_2^2}{n} + \sigma^2 - \frac{\|\bbeta\|_2^2}{n}) \\
	&= \frac{\sigma^2}{n^d}
	\end{align*}
	We can simplify equation \eqref{eq:mI 4}:
	
	\begin{align}
	\label{eq:mI 5}
	\mI(\bbeta^*,\calS) \leq \frac{n}{2} [\log \det(\frac{1}{|\calF_1|} \sum_{\bbeta \in \calF_1} \bSigma_{\bbeta} )  - \log \frac{\sigma^2}{n^d}] 
	\end{align}

	\noindent {\bf Computing determinant of covariance matrix $\bSigma$.} Now note that,
	
	\begin{align*}
	\bSigma = \frac{1}{|\calF_1|} \sum_{\bbeta \in \calF_1} \bSigma_{\bbeta}  = \begin{bmatrix} \frac{1}{n} & 0 & \dots & \frac{\bar\bbeta_1}{n} \\
	0 & \frac{1}{n} & \dots & \frac{\bar\bbeta_2}{n} \\
	\vdots & \vdots & \vdots & \vdots \\
	\frac{\bar\bbeta_1}{n} & \frac{\bar\bbeta_2}{n} & \dots & \frac{\sum_{\bbeta \in \calF_1}\|\bbeta\|_2^2}{|\calF_1|n} + \sigma^2
	\end{bmatrix}
	\end{align*}
	where $\bar\bbeta_i = \frac{\sum_{\bbeta \in \calF_1} \bbeta_i}{|\calF_1|}, \forall i \in [d]$. Using the same approach as equation \eqref{eq:block matrix determinant}, we can compute the determinant of $\bSigma$:
	
	\begin{align}
	\label{eq:determinant of average sigma}
	\det(\bSigma) = \frac{\sigma^2 + \frac{\sum_{\bbeta \in \calF_1}\|\bbeta\|_2^2}{|\calF_1|n} - \frac{\sum_{i=1}^d \bar\bbeta_i^2}{n}}{n^d}
	\end{align}
	Each $\bbeta_i \in \{k_1 \sigma \sqrt{n},\ k_2 \sigma \sqrt{n} \forall i \in [d] \}$ for some constants $k_1$ and $k_2$ with equal probability. Each $\bbeta$ is $s$-sparse and all the $\bbeta_i$ are treated equally. That is overall there should be $s |\calF_1|$ non-zero coefficients, half of them are $k_1 \sigma \sqrt{n}$ and the other half are $\ k_2 \sigma \sqrt{n}$. Using this, equation \eqref{eq:determinant of average sigma} implies:
	
	\begin{align*}
	\det(\bSigma) &= \frac{\sigma^2 + \frac{\frac{s|\calF_1|}{2} k_1^2 \sigma^2 n + \frac{s|\calF_1|}{2} k_2^2 \sigma^2 n }{|\calF_1|n} - \frac{\sum_{i=1}^d (\frac{\frac{s}{d} |\calF_1| \frac{k_1 + k_2}{2}}{|\calF_1|})^2}{n}}{n^d} \\
	&= \frac{\sigma^2}{n^d} (1 + \frac{s}{2} k_1^2 + \frac{s}{2} k_2^2 - \frac{s^2}{d} \frac{k_1^2}{4} - \frac{s^2}{d} \frac{k_2^2}{4}  - \frac{s^2}{d} \frac{k_1 k_2}{2} )
	\end{align*}
	
	\noindent Substituting this in equation \eqref{eq:mI 5}, we get
	
	\begin{align*}
	\mI(\bbeta^*,\calS) \leq \frac{n}{2} \log (1 + \frac{s}{2} k_1^2 + \frac{s}{2} k_2^2 - \frac{s^2}{d} \frac{k_1^2}{4} - \frac{s^2}{d} \frac{k_2^2}{4}  - \frac{s^2}{d} \frac{k_1 k_2}{2}) 
	\end{align*}
\end{proof}

Next, we analyze the noiseless case of standard compressed sensing.
\begin{lemma}
	\label{lem:mi bound wgm noiseless}
	Let $\bbeta^*$ be chosen uniformly at random from $\calF_2$ defined in equation~\eqref{eq:F noiseless wgm} then,
	\begin{align*}
	\mI(\bbeta^*,\calS) \leq 3n \log \frac{es}{27}  \ ,
	\end{align*}
	where $\calS$ is generated using the noiseless setup defined in equation \eqref{eq:linreg1}.
\end{lemma}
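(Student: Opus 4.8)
The plan is to show that, because the observation model is noiseless and $\bbeta^*$ is drawn independently of $\bX$, the mutual information collapses to a single conditional entropy that can then be controlled sample-by-sample. First I would apply the chain rule $\mI(\bbeta^*;\calS) = \mI(\bbeta^*;\bX) + \mI(\bbeta^*;\by \mid \bX)$. Since $\bbeta^*$ and $\bX$ are marginally independent, the first term vanishes. Moreover, in the noiseless model \eqref{eq:linreg1} we have $\by = \bX\bbeta^*$, so $\by$ is a deterministic function of $(\bX,\bbeta^*)$ and hence $H(\by \mid \bX,\bbeta^*) = 0$. This yields the clean identity $\mI(\bbeta^*;\calS) = \mI(\bbeta^*;\by\mid\bX) = H(\by\mid\bX)$, reducing the problem to bounding an entropy of the observations.

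The reason the construction uses a Bernoulli (rather than Gaussian) design matrix in the noiseless case now becomes the crux. With $\bX_{ij}\in\{-\tfrac{1}{\sqrt n},\tfrac{1}{\sqrt n}\}$ and $\bbeta\in\calF_2$ taking values in $\{-1,0,1\}$ as in \eqref{eq:F noiseless wgm}, every coordinate $\by_i$ is a \emph{discrete} random variable, so $H(\by\mid\bX)$ is an honest finite Shannon entropy rather than a differential entropy. This is what lets us sidestep the Gaussian-KL argument used for the noisy case in Lemma~\ref{lem:mi bound wgm noisy}. I would then use subadditivity of entropy, together with the fact that each pair $(\bX_i,\by_i)$ has the same marginal law, to write $H(\by\mid\bX) \le \sum_{i=1}^n H(\by_i\mid\bX_i) = n\,H(\by_1\mid\bX_1)$, where conditioning on the single row $\bX_i$ only decreases entropy.

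It then remains to bound the per-sample entropy $H(\by_1\mid\bX_1)$ by a counting argument. For a fixed row $\bX_1$, the quantity $\sqrt n\,\by_1 = \sum_{j\in S}(\sqrt n\,\bX_{1j})\,\bbeta_j$ is a sum of $s$ terms each equal to $\pm 1$, hence an integer in $[-s,s]$ of fixed parity, so it ranges over $O(s)$ possible values as $\bbeta$ varies over $\calF_2$. Bounding the entropy of a variable by the logarithm of the size of its support gives $H(\by_1\mid\bX_1) = O(\log s)$, and a careful estimate of this count together with a Stirling-type bound yields the stated constant $3\log\frac{es}{27}$; multiplying by $n$ completes the lemma.

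The main obstacle is conceptual rather than computational: in the noiseless regime the conditional law of $\calS$ given $\bbeta^*$ is degenerate (supported on a lower-dimensional set), so the variational Gaussian-KL bound of the noisy case simply does not apply. The idea that resolves this is to exploit the discreteness forced by the Bernoulli design, which converts the information bound into a finite counting problem on the values of $\by_i$. After that reduction, the only delicate step is pushing the value-count through the appropriate estimate to recover the precise constant in the claimed form.
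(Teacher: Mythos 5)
Your argument is correct in substance, but it reaches the bound by a genuinely different route than the paper. The paper never writes down $H(\by \mid \bX)$: it starts from the variational inequality \eqref{eq:mI 2}, $\mI(\bbeta^*,\calS) \leq \frac{n}{|\calF_2|}\sum_{\bbeta \in \calF_2}\KL(\prob_{(\bX_i,\by_i)\mid\bbeta}\,\|\,Q)$, chooses the product reference $Q_{(\bX_i,\by_i)} = \prob_{\bX_i} Q_{\by_i}$ with $Q_{\by_i}$ uniform on the value set $\calY$, and uses the degeneracy of $\prob_{\by_i \mid \bX_i,\bbeta}$ to evaluate each KL term exactly as $\log|\calY|$, counting $|\calY| \leq \binom{s+3}{3} \leq \frac{e^3 s^3}{27}$ by enumerating compositions for generic coefficient values $\{0,a,b\}$. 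Your chain-rule identity $\mI(\bbeta^*;\calS) = \mI(\bbeta^*;\by\mid\bX) = H(\by\mid\bX)$ (valid because $\bbeta^*$ and $\bX$ are independent and $\by$ is a deterministic \emph{discrete} function of $(\bX,\bbeta^*)$ under the Bernoulli design), followed by subadditivity, $H(\by\mid\bX) \leq \sum_{i} H(\by_i\mid\bX) \leq \sum_i H(\by_i\mid\bX_i)$, and a log-support bound, is an exact-entropy reduction rather than a KL upper bound; the two coincide numerically here because the KL of a point-mass conditional against a uniform reference is precisely a log-cardinality. What your route buys: by exploiting $\bbeta_j \in \{-1,+1\}$ on the support you get the much tighter count --- $\sqrt{n}\,\by_1$ is a sum of $s$ signs, so it takes at most $s+1$ values --- hence $\mI(\bbeta^*,\calS) \leq n\log(s+1)$, sharper than the paper's $n\log\frac{e^3 s^3}{27} = 3n\log\frac{es}{3}$ for all $s \geq 2$. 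What the paper's route buys: the KL-with-reference scaffolding is exactly what is reused in the noisy Gaussian case (Lemma~\ref{lem:mi bound wgm noisy}), where your identity breaks down ($H(\by\mid\bX,\bbeta^*) \neq 0$ and the entropies become differential).

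One caveat. Your closing claim that a Stirling-type estimate recovers ``the stated constant $3\log\frac{es}{27}$'' is the only loose point: from $\log(s+1)$ you neither need nor can naturally obtain that expression, and for moderate $s$ one even has $\log(s+1) > 3\log\frac{es}{27}$, so the printed inequality cannot be derived that way in that range. However, the printed constant is inconsistent with the paper's own proof, which ends with $\mI(\bbeta^*,\calS) \leq n\log\frac{e^3s^3}{3^3} = 3n\log\frac{es}{3}$; the $\frac{es}{27}$ in the lemma statement appears to be a typo for $\frac{es}{3}$. Measured against what the paper actually proves, your bound $n\log(s+1) \leq 3n\log\frac{es}{3}$ (valid for $s \geq 2$) is strictly stronger, so simply state that final comparison explicitly instead of gesturing at Stirling, and the proof is complete.
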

\begin{proof}
	We make use of the bound from equation \eqref{eq:mI 2}. For the noiseless case, we assume that the entries of $\bX_i$ follow a Bernoulli distribution with $\bX_{ij} \in \{-\frac{1}{\sqrt{n}},\frac{1}{\sqrt{n}}\}$. Now, since $\bbeta$ is an $s-sparse$ vector with binary non-zero entries $\by_i = \sum_{j=1}^d \bbeta_j \bX_{ij}$ takes values in a finite set which we denote as $\calY$. We can compute the size of $\calY$ in the following way. First, note that if $\bbeta_i \in \{0, a, b\}$ then
	\begin{align*}
	\begin{split}
	\calY &= \big\{ \by_i | \by_i = \alpha a + \beta b - \gamma a - \tau b ; \ \alpha + \beta + \gamma + \tau = s;\  \alpha, \beta, \gamma, \tau \in \mZ_{\geq 0} \big\} \\
	|\calY| &= {s + 3 \choose 3} \leq \frac{e^3s^3}{3^3}
	\end{split}
	\end{align*}  
	Now we assume that $Q_{(\bX_i, \by_i)} = \prob_{\bX_i} Q_{\by_i}$. Furthermore, let $Q_{\by_i}$ be a discrete uniform distribution on $\calY$. Recall that $\bX_i$ and $\bbeta^*$ are marginally independent. Thus, $\prob_{(\bX_i, \by_i, \bbeta)} = \prob_{\bX_i} \prob_{\bbeta} \prob_{\by_i | \bX_i, \bbeta}$. We bound the KL divergence between $\prob_{(\bX_i, \by_i)|\bbeta}$ and $Q$ as follows:
	\begin{align*}
	\KL(\prob_{(\bX_i, \by_i)|\bbeta} \| Q_{(\bX_i, \by_i)}) &= - \sum_{\bX_i, \by_i} \prob_{(\bX_i, \by_i)|\bbeta} \log \frac{Q_{(\bX_i, \by_i)}}{\prob_{(\bX_i, \by_i)|\bbeta}} \\
	&= - \sum_{\bX_i, \by_i} \prob_{\bX_i} \prob_{\by_i | \bX_i, \bbeta} \log \frac{Q_{(\bX_i, \by_i)}}{\prob_{\bX_i} \prob_{\by_i| \bX_i, \bbeta}} \\
	&= - \sum_{\bX_i, \by_i = \bbeta^\T \bX_i} \prob_{\bX_i}  \log \frac{\prob_{\bX_i} Q_{\by_i}}{\prob_{\bX_i}} \\
	&= - \sum_{\bX_i, \by_i = \bbeta^\T \bX_i} \prob_{\bX_i}  \log  Q_{\by_i} \\
	&= \log \frac{e^3s^3}{3^3}
	\end{align*}
	Thus,
	\begin{align*}
	\mI(\bbeta^*,\calS) \leq n \log \frac{e^3s^3}{3^3}
	\end{align*}
\end{proof}

The following lemmas provide an upper bound on the mutual information between the true signal $\bbeta^*$ and observed samples $ (\bX, \by = \sign(\bX \bbeta^* + \be))$ as in one-bit compressed sensing.  

\begin{lemma}
	\label{lem:mi bound one bit noisy}
	Let $\bbeta^*$ be chosen uniformly at random from $\calF_3$ defined in equation~\eqref{eq:F onebit wgm} then,
	\begin{align*}
	\mI(\bbeta^*,\calS) \leq 2n \log 2  \ ,
	\end{align*}
	where $\calS$ is generated using the setup defined in equation \eqref{eq:linreg2} for either the noisy or noiseless case.
\end{lemma}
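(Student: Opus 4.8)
The plan is to mirror the variational argument already used for the noiseless standard case in Lemma~\ref{lem:mi bound wgm noiseless}, now exploiting the fact that each label $\by_i \in \{-1,+1\}$ carries only a single bit. I would start from the generic inequality~\eqref{eq:mutual information inequality} with a product reference distribution $\bbQ = Q^n$, so that by the i.i.d.\ structure of the samples the bound collapses to the per-sample form~\eqref{eq:mI 2}, namely $\mI(\bbeta^*,\calS) \le \frac{n}{|\calF_3|}\sum_{\bbeta\in\calF_3}\KL(\prob_{(\bX_i,\by_i)|\bbeta}\|Q)$.

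The key step is the choice of $Q$. Exactly as in Lemma~\ref{lem:mi bound wgm noiseless}, I would take $Q_{(\bX_i,\by_i)} = \prob_{\bX_i}\,Q_{\by_i}$ with $Q_{\by_i}$ the uniform distribution on $\{-1,+1\}$, and use the marginal independence of $\bbeta^*$ and $\bX$ to write $\prob_{(\bX_i,\by_i)|\bbeta} = \prob_{\bX_i}\,\prob_{\by_i|\bX_i,\bbeta}$. The $\prob_{\bX_i}$ factors then cancel inside the logarithm, leaving $\KL(\prob_{(\bX_i,\by_i)|\bbeta}\|Q) = \mathbb{E}_{\bX_i}\big[\log 2 - H(\by_i\mid \bX_i,\bbeta)\big]$, where $H(\cdot)$ denotes Shannon entropy. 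Since $\by_i$ is a binary random variable, its conditional entropy satisfies $0 \le H(\by_i\mid\bX_i,\bbeta)\le\log 2$, so each summand is at most $\log 2$; averaging over $\calF_3$ and multiplying by $n$ gives $\mI(\bbeta^*,\calS)\le n\log 2\le 2n\log 2$, which is the claim.

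The observation that makes this work uniformly for both regimes is that nothing in the argument depends on whether $\be=\bzero$: in the noiseless case $\by_i=\sign(\bX_{i.}\bbeta)$ is a deterministic bit (so $H(\by_i\mid\bX_i,\bbeta)=0$), while in the noisy case $\by_i=\sign(\bX_{i.}\bbeta+\be_i)$ is a genuinely random bit, but in either situation $\by_i$ takes only two values and the entropy bound $H(\by_i\mid\bX_i,\bbeta)\le\log 2$ holds. Thus a single computation covers the statement ``for either the noisy or noiseless case.''

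The main obstacle is purely bookkeeping: one must factor the reference $Q$ so that the (continuous, high-dimensional) $\bX_i$ contribution cancels exactly and only the binary $\by_i$ entropy survives, rather than attempting to bound a differential entropy of $\bX_i$ directly. As a sanity check, the same estimate follows without any reference distribution by noting $\mI(\bbeta^*;(\bX,\by)) = \mI(\bbeta^*;\by\mid\bX)\le H(\by\mid\bX)\le H(\by)\le n\log 2$, since $\bbeta^*$ is independent of $\bX$ and $\by\in\{-1,+1\}^n$; the stated constant $2n\log 2$ is simply a safe, loose form of this $n\log 2$ estimate.
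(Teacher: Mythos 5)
Your proposal is correct and follows essentially the same route as the paper's proof: both start from the per-sample bound \eqref{eq:mI 2}, choose the factored reference $Q_{(\bX_i,\by_i)}=\prob_{\bX_i}\,Q_{\by_i}$ with $Q_{\by_i}$ uniform on $\{-1,+1\}$, cancel the $\prob_{\bX_i}$ factors inside the logarithm, and note that nothing depends on whether $\be=\bzero$, so one computation covers both regimes. The only substantive difference is quantitative: you evaluate the per-sample KL exactly as $\log 2 - H(\by_i \mid \bX_i,\bbeta)\le \log 2$, yielding $\mI(\bbeta^*,\calS)\le n\log 2$, whereas the paper bounds the same quantity $p\log 2p+(1-p)\log 2(1-p)$ loosely by $2\log 2$ (its noiseless computation even asserts $=2\log 2$ where the value is $\log 2$), which is where the stated constant $2n\log 2$ comes from; both are valid for the lemma as stated, but your sharper constant would mildly improve the thresholds in Theorems~\ref{thm:mainresult2} and~\ref{thm:noisymainresult2}. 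Your closing sanity check is also worth highlighting as a genuinely more elementary alternative: since $\bbeta^*$ and $\bX$ are independent, the chain rule gives $\mI(\bbeta^*;(\bX,\by)) = \mI(\bbeta^*;\by\mid\bX) \le H(\by)\le n\log 2$ directly, with no reference distribution at all; the paper presumably retains the variational machinery because it is the common scaffold reused in Lemmas~\ref{lem:mi bound wgm noisy} and~\ref{lem:mi bound wgm noiseless}, where the observation alphabet is not binary and the choice of $Q$ actually matters.
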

\begin{proof}
	Again, we make use of the bound from equation~\eqref{eq:mI 2}. We first analyze the noisy case. 
	\paragraph{Noisy Case.}
	In this case, $\by_i \in \calY, \forall i \in [d]$ can take two possible values and hence $\calY = \{+1, -1\}$ where $\calY$ is the set of all possible values of $\by_i$. Thus $|\calY| = 2$. Let $\prob_{\by_i  | \bX_{i.}, \bbeta}(\by_i  = +1 ) = \prob(\be_i > - \bX_{i.}^\T \bbeta) = p$ and $\prob_{\by_i  | \bX_{i.}, \bbeta}(\by_i  = -1) =1 - p$. We choose $Q_{(\bX_i, \by_i)} = \prob_{\bX_i} Q_{\by_i}$, where $Q_{\by_i} = \text{Bernoulli}(\frac{1}{2})$. Similar to the proof of Lemma~\ref{lem:mi bound wgm noiseless}, we can bound the KL divergence between $\prob_{(\bX_i, \by_i)|\bbeta}$ and $Q_{(\bX_i, \by_i)}$ as follows:  
	\begin{align*}
	\KL(\prob_{(\bX_i, \by_i)|\bbeta} \| Q_{(\bX_i, \by_i)}) &= - \sum_{\bX_i, \by_i} \prob_{(\bX_i, \by_i)|\bbeta} \log \frac{Q_{(\bX_i, \by_i)}}{\prob_{(\bX_i, \by_i)|\bbeta}} \\
	&= - \sum_{\bX_i, \by_i} \prob_{\bX_i} \prob_{\by_i | \bX_i, \bbeta} \log \frac{Q_{(\bX_i, \by_i)}}{\prob_{\bX_i}  \prob_{\by_i| \bX_i, \bbeta}} \\
	&= - \sum_{\bX_i, \by_i \in \{+1,-1\}} \prob_{\bX_i} \prob_{\by_i| \bX_i, \bbeta} \log \frac{\prob_{\bX_i} Q_{\by_i}}{\prob_{\bX_i}\prob_{\by_i| \bX_i, \bbeta}} \\
	&= p \log 2p + (1 - p) \log 2(1 - p) \\
	&\leq 2\log 2
	\end{align*}
	\noindent Thus,
	\begin{align*}
	\mI(\bbeta^*,\calS) \leq 2n \log 2
	\end{align*}	
	Next we analyze the noiseless case.
	
	\paragraph{Noiseless Case.}
	Again, the number of values that $\by \in \calY$ can take is two, i.e., $|\calY| = 2$. We choose $Q_{(\bX_i, \by_i)} = \prob_{\bX_i} Q_{\by_i}$, where $Q_{\by_i} = \text{Bernoulli}(\frac{1}{2})$. Then using the same approach as above, we get
	\begin{align*}
	\KL(\prob_{(\bX_i, \by_i)|\bbeta} \| Q_{(\bX_i, \by_i)}) &= - \sum_{\bX_i, \by_i} \prob_{(\bX_i, \by_i)|\bbeta} \log \frac{Q_{(\bX_i, \by_i)}}{\prob_{(\bX_i, \by_i)|\bbeta}} \\
	&= - \sum_{\bX_i, \by_i} \prob_{\bX_i} \prob_{\by_i | \bX_i, \bbeta} \log \frac{Q_{(\bX_i, \by_i)}}{\prob_{\bX_i}  \prob_{\by_i| \bX_i, \bbeta}} \\
	&= - \sum_{\bX_i, \by_i = \sign(\bbeta^\T \bX_i)} \prob_{\bX_i}  \log \frac{Q_{\bX_i} Q_{\by_i}}{\prob_{\bX_i}} \\
	&= - \sum_{\bX_i, \by_i = \sign(\bbeta^\T \bX_i)} \prob_{\bX_i}  \log  Q_{\by_i} \\
	&= 2\log 2
	\end{align*}
	\noindent Thus,
	\begin{align*}
	\mI(\bbeta^*,\calS) \leq 2n \log 2
	\end{align*}
\end{proof}

\subsection{Bound on the Inference Error}
\label{subsec:bound on the inference error}

In this subsection, we analyze the inference error using the results from the previous sections and Fano's inequality. If nature chooses $\bbeta^*$ from a restricted ensemble $\calF$ uniformly at random then for the Markov chain described in Section~\ref{sec:problem description}, Fano's inequality \cite{cover2006elements} can be written as
\begin{align}
\label{eq:fano}
\prob(\hat\bbeta \neq \bbeta^*) &\geq 1 - \frac{\mI(\bbeta^*, \calS) + \log 2}{\log |\calF|} \ .
\end{align} 
Since, we have already established bounds on $\mI(\bbeta^*, \calS)$ and $\log |\calF|$, equation \eqref{eq:fano} readily provides a bound on the error of exact recovery. We can use this directly to prove some of our theorems.
\subsubsection{Proof of Theorem~\ref{thm:mainresult1} - Noiseless Case in Standard Compressed Sensing}
Using the results from Lemmas~\ref{lem:mi bound wgm noiseless} and subsection~\ref{subsec:restrcited ensemble for WGM} along with equation~\eqref{eq:fano}, we get:
\begin{align*}
\prob(\hat\bbeta \neq \bbeta^*) &\geq 1 - \frac{ 3n \log \frac{es}{27} + \log 2}{\log \big( 2^s (\frac{d}{g})^g (\frac{\rho(G)Bg}{2(s-g)^2})^{(s-g)}  \big)} 
\end{align*} 
It follows that $\prob(\hat\bbeta \neq \bbeta^*) \geq \frac{1}{2}$ as long as $n \leq  \frac{1}{6}\frac{(s-g) (\log \frac{\rho(G)}{2} + \log \frac{B}{s-g}) + g \log \frac{d}{g} + (s - g) \log \frac{g}{s - g} + s \log 2}{ \log \frac{es}{27} } - \frac{\log 2}{3 \log \frac{es}{27}}$. This proves Theorem~\ref{thm:mainresult1}.

\subsubsection{Proof of Theorem~\ref{thm:mainresult2} - Exact Recovery in One-bit Compressed Sensing}
Using the results from Lemmas \ref{lem:cardinality of restricted ensemble onebit} and \ref{lem:mi bound one bit noisy}  along with equation~\eqref{eq:fano}, we get:
\begin{align*}
\prob(\hat\bbeta \neq \bbeta^*) &\geq 1 - \frac{ 2n \log 2 + \log 2}{\log \big( 2^{\frac{s}{2}} (\frac{d}{g})^g (\frac{\rho(G)Bg}{2(s-g)^2})^{(s-g)}  \big)} 
\end{align*} 
It follows that $\prob(\hat\bbeta \neq \bbeta^*) \geq \frac{1}{2}$ as long as $n \leq  \frac{1}{2}\frac{(s-g) (\log \frac{\rho(G)}{2} + \log \frac{B}{s-g}) + g \log \frac{d}{g} + (s - g) \log \frac{g}{s - g} + \frac{s}{2} \log 2}{ 2 \log 2 } - \frac{1}{2}$. This proves Theorem~\ref{thm:mainresult2}.

Now we prove the remaining results, which require additional arguments. 

\subsubsection{Proof of Theorem~\ref{thm:noisymainresult1} - Noisy Case in Standard Compressed Sensing}
For noisy setups as described in equation \eqref{eq:linreg1}, we are interested in the error bound in terms of the $\ell_2$-norm. We obtain this bound by using the following rule:
\begin{align}
\label{eq:bound on error}
\begin{split}
\prob(\| \hat{\bbeta} - \bbeta^*  \| \geq C\|\be\|) &\geq \prob(\| \bbeta^*  - \hat{\bbeta} \| \geq C\|\be\|, \hat{\bbeta} \neq  \bbeta^* ) \\
&= \prob(\|  \hat{\bbeta} - \bbeta^*  \| \geq C\|\be\| \ | \  \hat{\bbeta} \neq  \bbeta^* ) \prob(\hat\bbeta \neq \bbeta^*)
\end{split}
\end{align}  
We bound the terms $\prob(\|  \hat{\bbeta} - \bbeta^*  \| \geq C\|\be\| \ | \  \hat{\bbeta} \neq  \bbeta^* )$ and $ \prob(\hat\bbeta \neq \bbeta^*)$ separately. 

\begin{lemma}[\cite{barik2017information}]
	\label{lemma: bound on error}
	If $\bbeta^*$ and $\hat{\bbeta}$ come from a family of signals $\calF_1$ as defined in equation~\eqref{eq:F noisy wgm} then
	\begin{enumerate}
		\item For some $C_0 \geq C > 0$
		\begin{align}
		\label{eq:ballcovering}
		\| \bbeta^* - \hat{\bbeta} \| \leq \frac{C_0\sigma \sqrt{n}}{\sqrt{(1-\epsilon)}} \iff \bbeta^* = \hat{\bbeta}\ .
		\end{align}
		\item For some $0 < \epsilon < 1$,
		\begin{align}
		\label{eq:ebound}
		\prob\Big(\|\be \|^2 \leq \sigma^2 \frac{n}{1 - \epsilon}\Big) \geq 1 - \exp\Big(-\frac{\epsilon^2 n}{4}\Big)\ .
		\end{align}
		\item If the above two claims hold then,
		\begin{align}
		\label{eq: concen bound}
		\mathbb{P}\Big(\|  \hat{\bbeta} - \bbeta^*  \| \geq C \| \be \| \ | \ \hat{\bbeta} \neq \bbeta^*  \Big) \geq 1 - \exp\Big(-\frac{\epsilon^2 n}{4}\Big)\ .
		\end{align}
	\end{enumerate}	
\end{lemma}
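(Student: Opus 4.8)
The plan is to establish the three claims in order, since the third follows by combining the first two.

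\textbf{Part 1 (separation of the ensemble \eqref{eq:ballcovering}).} This is really a packing statement: distinct elements of $\calF_1$ are far apart in $\ell_2$, and the value choices in \eqref{eq:F noisy wgm} are engineered precisely so that the minimum separation equals the stated threshold. Writing the two admissible nonzero values as $a = \frac{C_0\sigma\sqrt n}{\sqrt{2(1-\epsilon)}}$ and $b = a + \frac{C_0\sigma\sqrt n}{\sqrt{1-\epsilon}}$, note $a,b>0$ and $b-a = \frac{C_0\sigma\sqrt n}{\sqrt{1-\epsilon}}$. I would take two distinct signals $\bbeta^*,\hat\bbeta \in \calF_1$ with supports $S^*,\hat S$, both of size $s$ by Definition~\ref{def:WGM}, and split into cases. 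If $S^* = \hat S$, the signals differ in the value at some support coordinate, where the gap is exactly $b-a$, so $\|\bbeta^*-\hat\bbeta\|^2 \ge (b-a)^2 = \frac{C_0^2\sigma^2 n}{1-\epsilon}$. If $S^* \ne \hat S$, then since both supports have cardinality $s$ the symmetric difference contains at least one index in each direction; at any index of $S^*\setminus\hat S$ (resp.\ $\hat S\setminus S^*$) one signal is $0$ while the other is at least $a$, giving at least two coordinates each contributing $a^2$, hence $\|\bbeta^*-\hat\bbeta\|^2 \ge 2a^2 = \frac{C_0^2\sigma^2 n}{1-\epsilon}$. In both cases $\|\bbeta^*-\hat\bbeta\| \ge \frac{C_0\sigma\sqrt n}{\sqrt{1-\epsilon}}$, which is the nontrivial (contrapositive) direction of the equivalence; the reverse direction is immediate.

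\textbf{Part 2 (noise concentration \eqref{eq:ebound}).} Since the $\be_i$ are i.i.d.\ $\mathcal{N}(0,\sigma^2)$, the quantity $\|\be\|^2/\sigma^2$ is $\chi^2_n$-distributed, and the threshold $\frac{n}{1-\epsilon}$ lies strictly above its mean $n$, so this is an upper-tail estimate. I would apply the standard Chernoff argument with the chi-squared moment generating function $(1-2\lambda)^{-n/2}$ and optimize over $\lambda$, obtaining
\begin{align*}
\prob\Big(\tfrac{\|\be\|^2}{\sigma^2} \ge \tfrac{n}{1-\epsilon}\Big) \le \exp\Big(-\tfrac{n}{2}\big(\tfrac{\epsilon}{1-\epsilon} + \log(1-\epsilon)\big)\Big).
\end{align*}
It then remains to verify the elementary inequality $\frac{\epsilon}{1-\epsilon} + \log(1-\epsilon) \ge \frac{\epsilon^2}{2}$ for $\epsilon \in (0,1)$, which follows since both sides vanish at $\epsilon = 0$ and the derivative of the left side dominates. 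This yields the tail $\exp(-\epsilon^2 n/4)$, and complementation gives \eqref{eq:ebound}.

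\textbf{Part 3 (combination \eqref{eq: concen bound}).} Here I would glue the two facts together, exploiting that Part 1 is a \emph{deterministic} geometric fact: on $\{\hat\bbeta \ne \bbeta^*\}$ one always has $\|\hat\bbeta-\bbeta^*\| \ge \frac{C_0\sigma\sqrt n}{\sqrt{1-\epsilon}}$. Simultaneously, on $\{\|\be\|^2 \le \frac{\sigma^2 n}{1-\epsilon}\}$ we have $C\|\be\| \le C\frac{\sigma\sqrt n}{\sqrt{1-\epsilon}} \le C_0\frac{\sigma\sqrt n}{\sqrt{1-\epsilon}}$ using $C \le C_0$; chaining the two inequalities shows that on the intersection of these events $\|\hat\bbeta-\bbeta^*\| \ge C\|\be\|$. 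Equivalently, the failure event $\{\|\hat\bbeta-\bbeta^*\| < C\|\be\|,\ \hat\bbeta \ne \bbeta^*\}$ is contained in $\{\|\be\|^2 > \frac{\sigma^2 n}{1-\epsilon}\}$, whose probability is at most $\exp(-\epsilon^2 n/4)$ by Part 2. The step I expect to be the main obstacle is the conditioning itself: because $\hat\bbeta$ is a function of $(\bX,\by)$ and $\by$ depends on $\be$, the noise event and $\{\hat\bbeta \ne \bbeta^*\}$ are not independent, so the unconditional tail of Part 2 does not transfer verbatim to the conditional probability. The cleanest rigorous route is to argue at the level of the joint probability via the deterministic containment above — which is exactly the form consumed in \eqref{eq:bound on error} — with the stated conditional bound then holding in the regime where $\prob(\hat\bbeta \ne \bbeta^*)$ is bounded away from $0$, as guaranteed by Fano's inequality \eqref{eq:fano}.
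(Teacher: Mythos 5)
The paper never proves this lemma itself --- it imports it verbatim from \cite{barik2017information} --- so your reconstruction stands on its own; it is essentially correct, and in one place more careful than the statement it proves. Parts 1 and 2 check out: your packing computation gives minimum squared distance $\min\{(b-a)^2,\ 2a^2\} = \frac{C_0^2\sigma^2 n}{1-\epsilon}$, which is exactly what the two values in \eqref{eq:F noisy wgm} are engineered to produce, and your chi-squared Chernoff bound (optimal $\lambda^* = \epsilon/2$, giving exponent $\frac{n}{2}\big(\frac{\epsilon}{1-\epsilon} + \log(1-\epsilon)\big)$) together with the elementary inequality $\frac{\epsilon}{1-\epsilon} + \log(1-\epsilon) \geq \frac{\epsilon^2}{2}$ --- which indeed holds since the difference vanishes at $\epsilon = 0$ and has derivative $\epsilon\big(\frac{1}{(1-\epsilon)^2} - 1\big) \geq 0$ --- yields \eqref{eq:ebound} exactly. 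One boundary wrinkle in Part 1: the minimum distance is \emph{attained} (same support differing in a single coordinate, or supports differing in one index with both values equal to $a$), so from $\|\bbeta^* - \hat{\bbeta}\| \geq \frac{C_0\sigma\sqrt{n}}{\sqrt{1-\epsilon}}$ you obtain the equivalence only with strict inequality $<$ on the left of \eqref{eq:ballcovering}; calling the $\geq$ bound ``the contrapositive direction'' of the stated $\leq$-version is imprecise. This defect lives in the lemma statement itself rather than in your computation, and it is harmless downstream because the chaining in Part 3 only needs the non-strict $\geq$.

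Your handling of Part 3 correctly identifies the one genuine soft spot in the lemma as stated: since $\hat{\bbeta}$ is a function of $(\bX, \by)$ and $\by$ depends on $\be$, the event $\{\hat{\bbeta} \neq \bbeta^*\}$ is correlated with the noise, and your containment $\{\|\hat{\bbeta} - \bbeta^*\| < C\|\be\|\} \cap \{\hat{\bbeta} \neq \bbeta^*\} \subseteq \{\|\be\|^2 > \frac{\sigma^2 n}{1-\epsilon}\}$ only delivers $\prob\big(\|\hat{\bbeta} - \bbeta^*\| \geq C\|\be\| \ \big|\ \hat{\bbeta} \neq \bbeta^*\big) \geq 1 - \exp(-\epsilon^2 n/4)/\prob(\hat{\bbeta} \neq \bbeta^*)$, not \eqref{eq: concen bound} verbatim. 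Be aware, though, that your closing remark is slightly off: even when Fano's inequality guarantees $\prob(\hat{\bbeta} \neq \bbeta^*) \geq \frac{1}{2}$, the conditional bound degrades to $1 - 2\exp(-\epsilon^2 n/4)$, so ``bounded away from $0$'' does not recover the stated constant. The clean repair --- consistent with how \eqref{eq:bound on error} actually consumes the lemma --- is to bound the joint event directly: $\prob(\|\hat{\bbeta} - \bbeta^*\| \geq C\|\be\|) \geq \prob(\hat{\bbeta} \neq \bbeta^*) - \exp(-\epsilon^2 n/4) \geq \frac{1}{2} - \exp(-\epsilon^2 n/4)$, which still delivers the $\frac{1}{10}$ constant of Theorem~\ref{thm:noisymainresult1} for all $n$ beyond a small absolute constant (or after retuning $\epsilon$). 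In short: your proof is correct wherever the lemma is correct, and you have accurately located the step at which the lemma as stated outruns what its first two parts can deliver.
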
 

Using the results from Lemmas~\ref{lem:mi bound wgm noisy} and \ref{lem:cardinality of restricted ensemble} along with equation~\eqref{eq:fano}, we get:
\begin{align*}
\prob(\hat\bbeta \neq \bbeta^*) &\geq 1 - \frac{\frac{n}{2} \log (1 + \frac{s}{2} k_1^2 + \frac{s}{2} k_2^2 - \frac{s^2}{d} \frac{k_1^2}{4} - \frac{s^2}{d} \frac{k_2^2}{4}  - \frac{s^2}{d} \frac{k_1 k_2}{2})  + \log 2}{\log \big( 2^s (\frac{d}{g})^g (\frac{\rho(G)Bg}{2(s-g)^2})^{(s-g)}  \big)} 
\end{align*} 
It follows that $\prob(\hat\bbeta \neq \bbeta^*) \geq \frac{1}{2}$ as long as $n \leq  \frac{(s-g) (\log \frac{\rho(G)}{2} + \log \frac{B}{s-g}) + g \log \frac{d}{g} + (s - g) \log \frac{g}{s - g} + s \log 2}{  \log (1 + \frac{s}{2} k_1^2 + \frac{s}{2} k_2^2 - \frac{s^2}{d} \frac{k_1^2}{4} - \frac{s^2}{d} \frac{k_2^2}{4}  - \frac{s^2}{d} \frac{k_1 k_2}{2})  } - \frac{\log 2}{\log (1 + \frac{s}{2} k_1^2 + \frac{s}{2} k_2^2 - \frac{s^2}{d} \frac{k_1^2}{4} - \frac{s^2}{d} \frac{k_2^2}{4}  - \frac{s^2}{d} \frac{k_1 k_2}{2}) }$. Now let $n \in \tilde{o}((s-g) (\log \rho(G) + \log \frac{B}{s-g}) + g \log \frac{d}{g} + (s - g) \log \frac{g}{s - g} + s \log 2)$ so that $\prob(\hat\bbeta \neq \bbeta^*) \geq \frac{1}{2}$. Using result from Lemma~\ref{lemma: bound on error} and equation~\eqref{eq:bound on error} we can write
\begin{align*}
\prob\Big(\|  \hat{\bbeta} - \bbeta^*  \| \geq C\|\be\|\Big) 
\geq \Big(1 - \exp\Big(-\frac{\epsilon^2 n}{4}\Big)\Big) \frac{1}{2} \ . 
\end{align*}
We know that $n \geq 1$ and if we choose $\epsilon \geq \sqrt{-4\log 0.8} \sim 0.9448$, then we can write inequality~\eqref{eq:bound on error} as,
\begin{align*}
\prob\Big(\|  \hat{\bbeta} - \bbeta^*  \| \geq C\|\be\|\Big)  &\geq \frac{1}{10} \ .
\end{align*}
This completes the proof of Theorem~\ref{thm:noisymainresult1}.

\subsubsection{Proof of Theorem~\ref{thm:noisymainresult2} - Approximate Recovery in One-bit Compressed Sensing}

First we note that,
\begin{align*}
\| \bbeta \| = 1 \ , \forall  \bbeta \in \calF_3, 
\end{align*}
\noindent and consequently,
\begin{align*}
\| \frac{\hat\bbeta}{\| \hat\bbeta \|} - \frac{\bbeta^*}{\| \bbeta^* \|}  \| &= \| \hat\bbeta - \bbeta^* \| \ .
\end{align*}
Therefore,
\begin{align}
\label{eq:approx error one bit}
\begin{split}
\prob(\| \frac{\hat\bbeta}{\| \hat\bbeta \|} - \frac{\bbeta^*}{\| \bbeta^* \|}  \| \geq \epsilon ) &= \prob(\| \hat\bbeta - \bbeta^* \| \geq \epsilon) \\
&\geq \prob(\| \hat\bbeta - \bbeta^* \| \geq \epsilon, \hat\bbeta \ne \bbeta^*) \\
&= \prob(\| \hat\bbeta - \bbeta^* \| \geq \epsilon\ | \ \hat\bbeta \ne \bbeta^*) \prob( \hat\bbeta \ne \bbeta^*)
\end{split}
\end{align}
Next we prove the following lemma.
\begin{lemma}
	\label{lem:ball covering onebit}
	If $\hat\bbeta, \bbeta^* \in \calF_3$ then $\prob(\| \hat\bbeta - \bbeta^* \| \geq \epsilon\ | \ \hat\bbeta \ne \bbeta^*) = 1$.
\end{lemma}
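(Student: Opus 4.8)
The plan is to recognize that this is really a deterministic minimum-distance (packing) statement dressed up probabilistically. Once $\hat\bbeta$ and $\bbeta^*$ are both fixed elements of the finite set $\calF_3$, the quantity $\| \hat\bbeta - \bbeta^* \|$ is a fixed number, so the conditional probability equals $1$ precisely when every pair of \emph{distinct} signals in $\calF_3$ is separated by at least $\epsilon$ in $\ell_2$-norm. Hence it suffices to lower bound the minimum pairwise distance of $\calF_3$ by $\epsilon$, and the randomness of the inference procedure plays no role beyond the conditioning event itself.

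First I would record that, by the definition in equation~\eqref{eq:F onebit wgm}, every coordinate of any $\bbeta \in \calF_3$ takes a value in the three-element set $\{ -\epsilon,\ 0,\ \sqrt{\tfrac{2}{s}} + \epsilon \}$, according to whether the index lies in $A$, outside the support $S$, or in $B$. This assignment is injective in the sense that the value at a coordinate reveals which of the three cases it belongs to; consequently two signals coincide as vectors if and only if they induce the same triple $(A,B,S)$. In particular $\hat\bbeta \ne \bbeta^*$ forces the existence of at least one coordinate $i$ with $\hat\bbeta_i \ne \bbeta^*_i$.

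Next I would compute the pairwise absolute differences among the three admissible coordinate values: $|0 - (-\epsilon)| = \epsilon$, $|0 - (\sqrt{2/s}+\epsilon)| = \sqrt{2/s}+\epsilon$, and $|-\epsilon - (\sqrt{2/s}+\epsilon)| = \sqrt{2/s}+2\epsilon$. Since $\sqrt{2/s} > 0$, the smallest of these is exactly $\epsilon$. Therefore any coordinate at which $\hat\bbeta$ and $\bbeta^*$ disagree contributes at least $\epsilon^2$ to $\|\hat\bbeta - \bbeta^*\|^2$, so that $\|\hat\bbeta - \bbeta^*\|^2 \ge \epsilon^2$ and hence $\|\hat\bbeta - \bbeta^*\| \ge \epsilon$ whenever the two signals differ. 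Because this holds for every admissible distinct pair, the conditional probability in the statement is $1$, proving the lemma.

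I expect essentially no obstacle here: the only points requiring care are the injectivity remark (so that ``$\hat\bbeta \ne \bbeta^*$ as vectors'' genuinely yields a coordinate of disagreement) and the verification that the minimum single-coordinate gap is the modest value $\epsilon$ rather than some smaller quantity coming from the $B$-entries; both follow immediately from the three explicit values. It is worth noting that this tight gap of $\epsilon$ is exactly the separation that, combined with equation~\eqref{eq:approx error one bit}, later converts the exact-recovery failure probability into the $\epsilon$-scale approximate-recovery bound of Theorem~\ref{thm:noisymainresult2}.
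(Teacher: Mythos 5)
Your proof is correct, and it reaches the same deterministic packing statement as the paper but by a slightly different decomposition. The paper's proof splits into two cases according to whether the two signals share the same support: when the supports coincide, a disagreeing coordinate gives $|\bbeta^1_i - \bbeta^2_i| = \sqrt{2/s} + 2\epsilon \geq \epsilon$; when they differ, it picks $i \in S_1 \setminus S_2$ and $j \in S_2 \setminus S_1$ and uses two coordinates to get $\|\bbeta^1 - \bbeta^2\| \geq \sqrt{\epsilon^2 + \epsilon^2} \geq \epsilon$. You collapse both cases into a single observation: every coordinate of every member of $\calF_3$ lies in the three-element set $\{-\epsilon,\ 0,\ \sqrt{2/s}+\epsilon\}$, whose minimum pairwise gap is exactly $\epsilon$, so any one coordinate of disagreement already forces $\|\hat\bbeta - \bbeta^*\| \geq \epsilon$. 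Your version is more economical and avoids the case analysis; the paper's version incidentally exhibits stronger separation constants ($\sqrt{2/s}+2\epsilon$ and $\sqrt{2}\,\epsilon$ in the respective cases), though that slack is never used downstream, since equation~\eqref{eq:approx error one bit} only needs the $\epsilon$-separation. One small remark: your injectivity digression about recovering $(A,B,S)$ from the vector is unnecessary --- $\hat\bbeta \ne \bbeta^*$ as vectors yields a disagreeing coordinate by definition --- but it is harmless, and the rest of your argument (in particular checking that the minimum gap is the $0$-versus-$(-\epsilon)$ gap and not something smaller involving the $B$-entries) is exactly the care the lemma requires.
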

\begin{proof}
	We prove that two arbitrarily chosen $\bbeta^1$ and $\bbeta^2$ such that $\bbeta^1, \bbeta^2 \in \calF_3$  as defined in equation~\eqref{eq:F onebit wgm} and $\bbeta^1 \neq \bbeta^2$ then $\| \bbeta^1 - \bbeta^2 \| \geq \epsilon$. 
	
	\paragraph{$\bbeta^1$ and $\bbeta^2$ have the same support.} Since we assume that $\bbeta^1 \neq \bbeta^2$, both vectors must differ in at least one coefficient on their support. Let $i$ be such a coefficient. Then,   
	\begin{align*}
	\|\bbeta^1 - \bbeta^2\| &\geq |\bbeta^1_i - \bbeta^2_i| \\
	&=|\sqrt{\frac{2}{s}} + 2 \epsilon| \\
	&\geq \epsilon
	\end{align*}
	\paragraph{$\bbeta^1$ and $\bbeta^2$ have different supports} When $\bbeta^1$ and $\bbeta^2$ have different supports then we can always find $i$ and $j$ such that $i \in S_1, i \notin S_2$ and $j \notin S_1, j \in S_2$ where $S_1$ and $S_2$ are supports of $\bbeta^1$ and $\bbeta^2$ respectively. Then, 
	\begin{align*}
	\| \bbeta^1 - \bbeta^2 \| &\geq \sqrt{(\bbeta^1_i)^2 + (\beta^2_j)^2 } \\
	&\geq \sqrt{\epsilon^2 + \epsilon^2}\\
	&\geq \epsilon \ .
	\end{align*}
	Since the above is true for any two arbitrarily chosen $\bbeta^1$ and $\bbeta^2$, this holds for $\bbeta^*$ and $\hat{\bbeta}$ as well. This proves the lemma.
\end{proof}

Substituting results from Lemma \ref{lem:ball covering onebit} into equation \eqref{eq:approx error one bit}, we get,
\begin{align*}
\begin{split}
\prob(\| \frac{\hat\bbeta}{\| \hat\bbeta \|} - \frac{\bbeta^*}{\| \bbeta^* \|}  \| \geq \epsilon ) &=  \prob( \hat\bbeta \ne \bbeta^*)
\end{split}
\end{align*}
Using the results from Lemmas~\ref{lem:mi bound one bit noisy} and \ref{lem:cardinality of restricted ensemble onebit} along with equation~\eqref{eq:fano}, we get:
\begin{align*}
\prob(\| \frac{\hat\bbeta}{\| \hat\bbeta \|} - \frac{\bbeta^*}{\| \bbeta^* \|}  \| \geq \epsilon ) &\geq 1 - \frac{ 2n \log 2 + \log 2}{\log \big( 2^{\frac{s}{2}} (\frac{d}{g})^g (\frac{\rho(G)Bg}{2(s-g)^2})^{(s-g)}  \big)}  
\end{align*} 
It follows that as long as $n \leq  \frac{1}{2}\frac{(s-g) (\log \frac{\rho(G)}{2} + \log \frac{B}{s-g}) + g \log \frac{d}{g} + (s - g) \log \frac{g}{s - g} + \frac{s}{2} \log 2}{ 2 \log 2 } - \frac{1}{2}$, we get $\prob(\| \frac{\hat\bbeta}{\| \hat\bbeta \|} - \frac{\bbeta^*}{\| \bbeta^* \|}  \| \geq \epsilon ) \geq \frac{1}{2}$ . This completes the proof of Theorem~\ref{thm:noisymainresult2}.

\subsection{Sample Complexity for Specific Sparsity Structures}
\label{subsec:sample complexity for specific sparsity structures}

Until now, the restricted ensembles used in our proofs are defined on a general weighted graph model. These proofs can be instantiated directly to commonly used sparsity structures such as tree structured sparsity, block sparsity and regular $s$-sparsity by defining $\calF_1, \calF_2$ and $\calF_3$ defined in equations \eqref{eq:F noisy wgm}, \eqref{eq:F noiseless wgm}, \eqref{eq:F onebit wgm}, on the models $\mathbb{M}_{\tree}, \mathbb{M}_{\block}$ and $\mathbb{M}_{\regular}$ defined in equations \eqref{eq:M tree}, \eqref{eq:M block} and \eqref{eq:M regular}. We only need to bound the cardinality of these restricted ensembles to extend our proofs to specific sparsity structures.

\paragraph{Proof of Corollary \ref{cor:tree sparse}.}  
We take a restricted ensemble $\calF \in \{\calF_1, \calF_2, \calF_3\}$ which is defined on $\mathbb{M}_{\tree}$. For such a restricted ensemble, we have that $\log |\mathcal{F}| \geq cs$ for an absolute constant $c > 0$. This follows from the fact that we have at least $2^s$ different choices of $\bbeta^*$ for standard compressed sensing and $2^{\frac{s}{2}}$ for one-bit compressed sensing. Using the below results from \cite{baraniuk2010model}, we note that this is not a weak bound as the upper bound is of the same order, i.e.,
\begin{align*}
|\calF| \leq \begin{cases}
2^{s} \frac{2^{2s + 8}}{s e^2}, s \geq \log d \\
2^{s} \frac{(2e)^s}{s+1}, s < \log d
\end{cases}, \forall \calF \in \{\calF_1, \calF_2 , \calF_3 \}
\end{align*}
From the above and using Theorems~\ref{thm:noisymainresult1}, \ref{thm:mainresult1}, \ref{thm:mainresult2},  and \ref{thm:noisymainresult2}, we prove our claim.

\paragraph{Proof of Corollary \ref{cor:block sparse}.}  
We take a restricted ensemble $\calF \in \{\calF_1, \calF_2, \calF_3\}$ which is defined on $\mathbb{M}_{\block}$. For such a restricted ensemble, one can bound $|\mathcal{F}|$ by choosing $K$ connected components from $N$. It is easy to see that the number of possible signals in this model $\calF$, would be, $|\mathcal{F}| \geq 2^{\frac{KJ}{2}} {N \choose K} \geq 2^{\frac{KJ}{2}} (\frac{N}{K})^K$. Given this and Theorems~\ref{thm:noisymainresult1}, \ref{thm:mainresult1}, \ref{thm:mainresult2},  and \ref{thm:noisymainresult2}, we prove our claim.

\paragraph{Proof of Corollary \ref{cor:regular sparse}.}  
In this case, we take a restricted ensemble $\calF \in \{\calF_1, \calF_2, \calF_3\}$ which is defined on $\mathbb{M}_{\regular}$. We can bound the cardinality of such a restricted ensemble $\calF$ in the following way.
\begin{align*}
\begin{split}
|\calF| &\geq 2^{\frac{s}{2}} {d \choose s}, \quad \forall \calF \in \{\calF_1, \calF_2, \calF_3 \} \\
&\geq 2^{\frac{s}{2}} \frac{d^s}{s^s}
\end{split} 
\end{align*} 

\noindent Given the above and Theorems~\ref{thm:noisymainresult1}, \ref{thm:mainresult1}, \ref{thm:mainresult2},  and \ref{thm:noisymainresult2},  we prove our claim.

\section{Concluding Remarks}
\label{sec:conclusion}

In this paper, we provide information theoretic lower bounds on the necessary number of samples required to recover a signal in compressed sensing. For our proofs, we have assumed that the signal has a sparsity structure which can be modeled by weighted graph model. We have provided specific bounds on the sample complexity for many commonly seen sparsity structures including tree-structured sparsity, block sparsity and regular $s$-sparsity. In case of regular $s$-sparsity, the bound on the sample complexity for one-bit compressed sensing is tight as it matches the current upper bound~\cite{rudelson2005geometric}. For standard compressed sensing, our bounds are tight up to a factor of $\frac{1}{\log s}$. The use of the model-based framework for one-bit compressed sensing remains an open area of research and our information theoretic lower bounds on sample complexity may act as a baseline comparison for the algorithms proposed in future.   





\vspace{2mm}

\end{document}